\documentclass[journal]{IEEEtran}
\IEEEoverridecommandlockouts

\usepackage{cite}
\usepackage{amsmath,amssymb,amsfonts, amsthm}
\usepackage{mathtools}
\usepackage{algorithmic}
\usepackage{graphicx}
\usepackage{textcomp}
\usepackage{xcolor}
\def\BibTeX{{\rm B\kern-.05em{\sc i\kern-.025em b}\kern-.08em
    T\kern-.1667em\lower.7ex\hbox{E}\kern-.125emX}}

\colorlet{revisioncolor}{red}

\def\BibTeX{{\rm B\kern-.05em{\sc i\kern-.025em b}\kern-.08em
    T\kern-.1667em\lower.7ex\hbox{E}\kern-.125emX}}
    
\usepackage{mathrsfs}
\usepackage{url}

\usepackage{tikz}

 \usepackage[caption=false,font=footnotesize]{subfig}

\newtheorem{Prop}{Proposition}[]
\newtheorem{Teo}{Theorem}[]
\newtheorem{Cor}{Corollary}

\newtheorem{Lema}{Lemma}[]

\newcommand{\Z}{\mathbb{Z}}

\newcommand{\R}{\mathbb{R}}

\newcommand{\h}{\mathbb{H}}

\newcommand{\s}{\mathscr{S}}

\begin{document}

\title{Shaping and Finite-Blocklength Analysis of Quaternionic Alamouti-Type Codes}

\author{Juliana~G. F. Souza,~\IEEEmembership{Member, ~IEEE,}
        and~Yu-Chih~Huang,~\IEEEmembership{Senior Member,~IEEE}
\thanks{Manuscript received Month XX, 2026.  This work was supported in part by the Sao Paulo Research Foundation (FAPESP) under Grant 2025/22940-2, by the National Science and Technology Council, Taiwan, under Grant NSTC 113-2223-E-A49-005-MY3 and a grant of access to Open AI models through the ChatGPT for Academic Researchers program. An earlier version of parts of this work was submitted at the 2026 IEEE International Symposium on Information Theory.}%
\thanks{Juliana G. F. Souza is with the Institute of Mathematics, Statistics and Scientific Computing (IMECC), University of Campinas (Unicamp), Campinas, São Paulo, Brazil (e-mail: julianagfs@ime.unicamp.br).}
\thanks{Yu-Chih Huang is with the Institute of Communications Engineering,
National Yang Ming Chiao Tung University, Hsinchu, Taiwan (e-mail: jerryhuang@nycu.edu.tw).}}

\maketitle

\begin{abstract}
We study finite $2\times2$ Alamouti-type space--time codes associated with classical maximal orders in definite quaternion algebras over $\mathbb Q$. For the family indexed by primes $q\equiv3\pmod4$, we prove that the Eisenstein case $q=3$ uniquely minimises the finite quotient-alphabet energy at fixed minimum lattice distance and alphabet size. This ordering holds for every integer $p\geq2$, including when the representatives and their common translation are jointly optimised, with an explicit positive gap for every $q>3$. For uniform finite inputs, we derive a controlled high-SNR expansion of the information variance and determine its limiting ratio to the mutual information deficit. The mutual information expansion specialises arbitrary input fading results and gives an explicit remainder. Standard random-coding and second-order achievability tools relate these quantities to reliability over independently faded Alamouti blocks. All alphabets preserve symbol-wise maximum-likelihood detection at the Alamouti level. Numerical energy, mutual information, and error-rate comparisons support the analysis. Centred comparisons with classical Alamouti, Golden, and Silver codes illustrate the trade-off between reliability, symbol rate, and decoding effort.
\end{abstract}

\begin{IEEEkeywords}
Alamouti code, quaternion order, finite constellation, shaping, finite-blocklength information theory.
\end{IEEEkeywords}

\section{Introduction}
\label{sec:introduction}

Space--time block codes (STBCs) provide transmit diversity in multiple-input multiple-output (MIMO) systems without requiring a large increase in receiver complexity. The $2\times2$ Alamouti code is the main orthogonal example. It achieves full diversity and permits exact symbol-wise maximum likelihood (ML) detection because its two columns are orthogonal~\cite{alamouti1998,tarokh1999stbc,paulraj2003}. In its usual form, the information symbols are selected from square quadrature-amplitude modulation (QAM) constellations obtained from the Gaussian integers $\mathbb Z[\mathrm i]$.

The Alamouti code also admits an algebraic interpretation. Its code matrices can be obtained from a quaternion algebra, and the determinant of a matrix is the reduced norm of the corresponding quaternion element~\cite{oggier2007cyclic}. More generally, division algebras, natural orders, and maximal orders have been used to construct fully diverse space--time lattices with non-vanishing determinants and controlled decoding properties~\cite{oggier2006perfect,elia2006explicit,hollanti2008maximal,vehkalahti2009densest,alves2015lattices}. These works show that algebraic orders can control the determinant structure, while their Euclidean embeddings determine the geometry of the transmitted constellation.

The choice of finite alphabet is closely related to constellation shaping. Geometric shaping changes the positions of constellation points, while probabilistic shaping changes their transmission probabilities. Constellation shaping has been discussed as a possible direction for future mobile systems~\cite{itur_m2516}, and both approaches have been considered in 3GPP contributions in ongoing 6G standardisation activities~\cite{Samsung2026Modulation6GR}.

Integrated sensing and communication (ISAC) is recognised as one of the main usage scenarios for 6G mobile systems~\cite{itur_m2160}. Constellation shaping is particularly attractive for ISAC, because the constellation affects both information transmission and sensing performance. In~\cite{Shao2026APSKISAC}, geometric shaping of amplitude and phase shift keying (APSK) constellations is studied, using their ring structure and energy allocation to control the communication sensing trade-off. Probabilistic shaping is used in~\cite{Du2024ISACPCS} to balance achievable information rate and sensing performance in orthogonal frequency division multiplexing (OFDM) systems. Joint optimisation of constellation geometry and symbol probabilities is proposed in~\cite{Geiger2025Joint} for OFDM-ISAC, considering both communication performance and target detection. Motivated by these developments, this work explores geometric constellation shaping for space--time codes, using uniform finite alphabets in a quaternionic Alamouti family.

The $2\times2$ Alamouti matrix accepts any complex input alphabet. Hence, replacing square QAM by a hexagonal alphabet does not by itself define a new orthogonal matrix design. Eisenstein constellations and embedded Alamouti structures have previously appeared in modulation and space--time coding~\cite{freudenberger2017eisenstein,oggier2007cyclic,sinnokrot2008alamouti}.

Eisenstein based full rate algebraic STBCs are also available~\cite{wang2012quadratic}. The practical advantage considered here is the retention of exact symbol-wise ML decoding for every coefficient alphabet. Each Alamouti block carries two complex symbols over two channel uses, while the Golden and Silver codes carry four~\cite{belfiore2005golden,hollanti2008silver}. Their fast ML decoders reduce the search complexity but retain coupled symbol decisions~\cite{sinnokrot2010golden,srinath2011silver}. Thus the orthogonal family offers simpler symbol separation at the cost of a lower symbol rate. Orthogonality and this decoding structure are inherited from the Alamouti design. Our contribution concerns the finite alphabet shaping and information theoretic analysis in the quaternionic family.

In our earlier work~\cite{souza2026AESTBC}, we considered the Eisenstein case obtained from $(-1,-3)_{\mathbb Q}$ and studied shaping, mutual information, and finite-blocklength performance. That work already included the variance contribution caused by shared fading and a normal approximation. The present paper establishes exact finite energy comparisons throughout the prime indexed family, including joint optimisation of representatives and translation. It also replaces the earlier distance based mutual information approximation by a controlled fixed-alphabet expansion and gives a controlled information variance expansion with its limiting ratio to the mutual information deficit.

Finite-blocklength information theory describes reliability when the packet length is limited~\cite{polyanskiy2010finite}. Bounds for multiple-antenna Rayleigh block-fading channels were studied in~\cite{durisi2016short-packet}. In~\cite{collins2019coherent}, the coherent channel dispersion was derived, and the role of orthogonal designs, including Alamouti, was explained for the MISO setting with Gaussian signalling. Arbitrary input mutual information expansions over fading channels were studied in~\cite{rodrigues2014multiple,ramos2014fading}. We specialise this mutual information framework to uniform finite alphabets with an explicit remainder. For the information variance, we control the high-SNR remainder and determine the limiting ratio to the mutual information deficit, retaining the contribution of the fading shared by the two Alamouti symbols. Standard achievability tools then connect these fixed-input quantities to finite-blocklength reliability.

We consider the imaginary quadratic fields $K_q=\mathbb Q(\sqrt{-q})$, where $q\equiv3\pmod4$ is prime, and their rings of integers $\mathcal O_q$. For each $q$, the definite quaternion algebra $B_q=K_q\oplus uK_q, u^2=-1, ux=\bar xu,$ contains the classical maximal order $\Gamma_q=\mathcal O_q\oplus u\mathcal O_q$~\cite{voight2021quaternion}. Its complex embedding gives an Alamouti-type code whose coefficient lattice is denoted by $\Lambda_q$. All members have the same orthogonal transmission structure, diversity order, and symbol-wise ML detector. The family therefore gives a controlled setting in which the influence of the coefficient lattice can be studied without changing the Alamouti encoder or its decision structure. Different values of $q$ also support different message partitions through the Chinese remainder theorem, as discussed in Section~\ref{subsec:side-information}. The quaternion algebras, their maximal orders, and the orthogonal matrix are established objects. This common structure lets us isolate the finite energy and information variance properties of the coefficient alphabets.

The main contributions are summarised as follows.

\begin{itemize}
 \item We establish exact finite energy optimality of the Eisenstein case for every $p\geq2$ in the prime indexed family, at fixed alphabet size and minimum lattice distance. Theorems~\ref{thm:finite-family-optimality} and~\ref{thm:translated-family-optimality} cover the prescribed quotient alphabets and joint optimisation of the representatives and their common translation, with an explicit positive gap for every $q>3$.

 \item We derive a controlled high-SNR expansion of the information variance and its limiting ratio to the mutual information deficit in Theorem~\ref{thm:controlled-variance-expansion}. The shared fading contribution is retained. The accompanying mutual information expansion specialises arbitrary input fading results with an explicit remainder and a coefficient determined by the complete AWGN information deficit.

 \item We use standard random-coding and second-order tools to study the reliability implications for independently faded Alamouti blocks. Numerical comparisons illustrate the finite-alphabet effects while retaining symbol-wise ML detection. Equal rate, equal energy comparisons with Golden and Silver codes place these effects in the context of symbol rate and decoding effort.
\end{itemize}

The remainder of the paper is organised as follows. Section~\ref{sec:preliminaries} recalls the required facts about quaternion algebras, maximal orders, and lattices. Section~\ref{sec:stbc} presents the channel models and the quaternionic Alamouti family. Section~\ref{sec:shaping-mi} studies shaping, finite-constellation energy, and constellation constrained mutual information. Section~\ref{sec:finite-blocklength} gives the finite-blocklength analysis. Section~\ref{sec:numerical-results} presents the numerical energy, mutual information, error-rate, short-packet, and standard-STBC comparisons. Section~\ref{sec:conclusion} concludes the paper.

\subsection*{Notation}
The sets $\mathbb Z$, $\mathbb Q$, $\mathbb R$, and $\mathbb C$ denote integers, rationals, reals, and complex numbers, respectively, with $\mathrm i^2=-1$. Quaternion orders and Euclidean lattices are denoted by $\Gamma$ and $\Lambda$. In particular, $\Gamma_q$ has coefficient lattice $\Lambda_q$ and matrix lattice $\mathcal L_q=\Phi_q(\Gamma_q)$. The sets $\mathcal X_{\Lambda,p}$, $\mathcal A_{\Lambda,p}$, and $\mathcal C_{q,p}$ denote the unnormalised quotient alphabet, its energy-normalised version, and the corresponding space--time codebook. Bold capitals denote matrices, including $\mathbf X_b$ and $\mathbf H_b$. We use $(\cdot)^{\mathsf T}$, $(\cdot)^\dagger$, and $\|\cdot\|_{\mathrm F}$ for transpose, Hermitian transpose, and Frobenius norm; $|x|$ and $|\mathcal A|$ denote scalar modulus and set cardinality. We write $h_b=\|\mathbf H_b\|_{\mathrm F}^2$ for the effective fading gain and $G(\Lambda)$ for the normalised second moment. The distributions $\mathcal N_{\mathbb C}(0,\sigma^2)$ and $\Gamma(k,\theta)$ denote circularly symmetric complex Gaussian and Gamma distributions, the latter with shape $k$ and scale $\theta$. Information quantities are in bits per complex channel use unless stated otherwise.

\section{Preliminaries}
\label{sec:preliminaries}

We recall the algebraic and geometric concepts needed for the construction. The reduced norm links the quaternion algebra to the determinant of a space--time matrix. The shaping analysis concerns the two-dimensional lattice obtained from the coefficient ring.

\subsection{Quaternion algebras and maximal orders}
\label{subsec:quaternion-orders}

We use the standard definitions of quaternion algebras and orders from \cite{berhuy2013central,voight2021quaternion}.

For the proposed construction, it is useful to write the algebra in a form that shows a quadratic subfield.  Let $K/\mathbb Q$ be quadratic, denote its nontrivial automorphism by $x\mapsto\bar{x}$, and fix $c\in\mathbb Q^{\times}$.  The crossed-product presentation
\begin{equation*}
 B=(K,c)=K\oplus uK,\quad u^2=c,\quad ux=\bar{x}u \label{eq:prelim-crossed-product}
\end{equation*}
is a quaternion algebra over $\mathbb Q$.  If $K=\mathbb Q(\sqrt d)$, then $(K,c)$ is isomorphic to $\left(\frac{d,c}{\mathbb Q}\right)$.  Direct multiplication gives
\begin{equation}
 (x+uy)(x'+uy') = xx'+c\bar y y'+u\bigl(\bar x y'+yx'\bigr), \label{eq:prelim-crossed-product-multiplication}
\end{equation}
and therefore
\begin{equation}
\operatorname{Nrd}_B(x+uy) = \operatorname{N}_{K/\mathbb Q}(x) -c\operatorname{N}_{K/\mathbb Q}(y).
\label{eq:prelim-crossed-product-norm}
\end{equation}

When $K$ is imaginary quadratic and $c<0$, the right-hand side of \eqref{eq:prelim-crossed-product-norm} is positive for every nonzero element. Therefore, $B$ is a definite quaternion algebra.

We next recall the facts about orders needed for the proposed family.  An order $\Gamma\subset B$ is a subring containing $1$ and a full $\mathbb Z$-lattice in $B$.  It is maximal if it is not properly contained in another order.  For $c\in\mathbb Z\setminus\{0\}$, the ring of integers $\mathcal O_K$ defines the natural crossed-product order
\begin{equation*}
\Gamma(K,c)=\mathcal O_K\oplus u\mathcal O_K. \label{eq:prelim-natural-order}
\end{equation*}
Equation~\eqref{eq:prelim-crossed-product-multiplication} shows that this set is closed under multiplication.

Maximality is an arithmetic property and does not by itself imply better communication performance. For the orders used here, maximality follows from the classical construction in \cite[Ex.~15.5.7]{voight2021quaternion}.

\subsection{Lattices and finite constellations}
\label{subsec:lattices-r2}

We now introduce the geometric measures needed for the shaping analysis. A lattice in $\mathbb R^2$ is written as $\Lambda=\mathbf M_{\Lambda}\mathbb Z^2 \subset\mathbb R^2\simeq\mathbb C,$ where $\mathbf M_{\Lambda}\in\mathbb R^{2\times2}$ is invertible. Its fundamental volume is $\operatorname{vol}(\Lambda)=|\det\mathbf M_{\Lambda}|$, and its minimum distance is
\begin{equation*}
 d_{\min}(\Lambda) =\min_{\lambda\in\Lambda\setminus\{0\}}|\lambda|.
 \label{eq:prelim-minimum-distance}
\end{equation*}
The Voronoi region of $\Lambda$ is
\begin{equation*}
 \mathcal V(\Lambda) =\bigl\{z\in\mathbb C:|z|\leq |z-\lambda| \text{ for all }\lambda\in\Lambda\bigr\}.
 \label{eq:prelim-voronoi-region}
\end{equation*}
It consists of the points at least as close to the origin as to any other lattice point.

The normalised second moment in dimension 2, which is unchanged by scaling, is
\begin{equation*}
 G(\Lambda) =\frac{1}{2\operatorname{vol}(\Lambda)^2} \int_{\mathcal V(\Lambda)}|z|^2\,\mathrm dz.
 \label{eq:prelim-normalised-second-moment}
\end{equation*}

For a fixed volume, this quantity measures the average energy of a large Voronoi shaped constellation \cite{Con2013}.

To define a finite alphabet clearly at the boundary, let $\mathcal V_0(\Lambda)$ be a half-open version of $\mathcal V(\Lambda)$, with boundary ties assigned in a fixed way.  For an integer $p\geq2$, define
\begin{equation*}
 \mathcal X_{\Lambda,p} =\Lambda\cap p\mathcal V_0(\Lambda).
 \label{eq:prelim-finite-alphabet}
\end{equation*}

This set contains one representative of each coset of $\Lambda/p\Lambda$ and therefore $|\mathcal X_{\Lambda,p}|=p^2$.  For uniform signalling, its average symbol energy is
\begin{equation*}
 E_s(\Lambda,p) =\frac{1}{p^2} \sum_{x\in\mathcal X_{\Lambda,p}}|x|^2.
 \label{eq:prelim-finite-energy}
\end{equation*}

We use the energy normalised alphabet $\mathcal A_{\Lambda,p} = E_s(\Lambda,p)^{-1/2}\mathcal X_{\Lambda,p},$ for which a uniform symbol has average energy one. 

For the family studied in this work, let $K_q=\mathbb Q(\sqrt{-q})$, where $q\equiv3\pmod4$ is prime, and let $\mathcal O_q=\mathbb Z[\omega_q]$ with $ \omega_q=\frac{1+\sqrt{-q}}{2}.$ Choose the embedding $\sigma_q(\sqrt{-q})=\mathrm i\sqrt q$.  The coefficient lattice $\Lambda_q=\sigma_q(\mathcal O_q)$ has generator matrix and volume
\begin{equation*}
 \mathbf M_q=
 \begin{pmatrix}
  1&1/2\\
  0&\sqrt q/2
 \end{pmatrix},\quad \operatorname{vol}(\Lambda_q)=\frac{\sqrt q}{2}.
 \label{eq:prelim-family-lattice}
\end{equation*}

For $q=3$, $\Lambda_q$ is the hexagonal lattice associated with the Eisenstein integers $\Z[\frac{1+\sqrt{-3}}{2}]$, while $\mathbb Z[\mathrm i]$ is the square lattice used by the classical Alamouti code~\cite{souza2026AESTBC}.

\section{Space--Time Block Codes}
\label{sec:stbc}

Now we state the channel and fading models used in the paper and construct the proposed family of quaternionic Alamouti codes describing its diversity and decoding properties. Since every code matrix uses two transmit antennas and two time slots, we specialise the notation to this setting from the beginning.

\subsection{Channel and fading models} \label{subsec:channel-fading-models}

Let $b\in\{1,\ldots,m\}$ denote an Alamouti block. The received signal is

\begin{equation}
 \mathbf Y_b =\sqrt{\frac{\rho}{2}}\,\mathbf H_b\mathbf X_b+\mathbf Z_b, \label{eq:stbc-channel}
\end{equation}
where $\mathbf X_b\in\mathbb C^{2\times2}$ is transmitted over two channel uses, $\mathbf H_b\in\mathbb C^{n_r\times2}$ is the channel matrix, and $\mathbf Z_b\in\mathbb C^{n_r\times2}$ is additive noise.  The entries of $\mathbf H_b$ and $\mathbf Z_b$ are independent and distributed as $\mathcal N_{\mathbb C}(0,1)$. The receiver knows $\mathbf H_b$, while the transmitter knows only its distribution.  Uniform selection from the modulation codebooks below satisfies

\begin{equation}
 \mathbb E\!\left[\|\mathbf X_b\|_{\mathrm F}^2\right]=4, \label{eq:stbc-energy-convention}
\end{equation}
so $\rho$ is the average receive SNR per receive antenna. For one space--time block, coherent maximum-likelihood (ML) detection gives

\begin{equation*}
 \widehat{\mathbf X}_b =\arg\min_{\mathbf X\in\mathcal C} \left\|\mathbf Y_b- \sqrt{\frac{\rho}{2}}\,\mathbf H_b\mathbf X\right\|_{\mathrm F}^2.
 \label{eq:stbc-ml-rule}
\end{equation*}

We use the \emph{block-memoryless model}: $\mathbf H_b$ is constant during block $b$ and independent across Alamouti blocks. Packet-wise quasi-static fading, in which the same channel remains fixed throughout a coded packet, is outside the finite-blocklength analysis below.

For coherent Rayleigh fading, a finite code $\mathcal C$ has full diversity if
\begin{equation*}
 \det(\mathbf X-\mathbf X')\neq0 \quad\text{for all distinct }\mathbf X,\mathbf X'\in\mathcal C.
 \label{eq:stbc-full-diversity-criterion}
\end{equation*}

For an infinite space--time lattice $\Lambda$, the minimum determinant is $\delta_{\min}(\Lambda) =\min_{\mathbf 0\neq\mathbf X\in\Lambda} |\det(\mathbf X)|^2.$ The lattice has the non-vanishing determinant (NVD) property if this value is positive \cite{tarokh1998criteria,oggier2007cyclic}.  

\subsection{A quaternionic Alamouti family}\label{subsec:maximal-quaternionic-alamouti-family}

Recall $K_q$, $\mathcal O_q$, and the lattice $\Lambda_q=\sigma_q(\mathcal O_q)$ from Section~\ref{subsec:lattices-r2}.  For every prime $q\equiv3\pmod4$, define $B_q=(K_q,-1)=K_q\oplus  uK_q, u^2=-1, ux=\bar xu,$ and 
$$\Gamma_q=\mathcal O_q\oplus u\mathcal O_q.$$
Here $\Gamma_q$ is a quaternion order, while $\Lambda_q$ is its coefficient lattice in $\mathbb R^2$. This notation keeps the arithmetic and geometric objects distinct.

\begin{Prop}\label{thm:maximal-quaternionic-alamouti-family}
Let $q\equiv3\pmod4$ be prime.  Then $B_q$ is a definite quaternion algebra and $\Gamma_q$ is a maximal order in $B_q$.  Moreover, the map
\begin{equation*}
 \Phi_q(x_0+ux_1)= \begin{pmatrix}
  \sigma_q(x_0)&-\overline{\sigma_q(x_1)}\\
  \sigma_q(x_1)& \overline{\sigma_q(x_0)}
 \end{pmatrix}
 \label{eq:stbc-family-embedding}
\end{equation*}
is an injective $\mathbb Q$-algebra representation of $B_q$ in $M_2(\mathbb C)$.  For every $z=x_0+ux_1\in B_q$,
\begin{itemize}
    \item[(i)] $\Phi_q(z)^{\dagger}\Phi_q(z) =\operatorname{Nrd}_{B_q}(z)\mathbf I_2,$
    \item[(ii)] $\det\Phi_q(z) =\operatorname{Nrd}_{B_q}(z) =|\sigma_q(x_0)|^2+|\sigma_q(x_1)|^2,$
    \item[(iii)] $\frac12\|\Phi_q(z)\|_{\mathrm F}^2 =\operatorname{Nrd}_{B_q}(z).$
\end{itemize}
Consequently, $\mathcal L_q=\Phi_q(\Gamma_q)$ is isometric to $\Lambda_q\oplus\Lambda_q$ under the normalised Frobenius metric, has full diversity, and satisfies $\delta_{\min}(\mathcal L_q)=1.$
\end{Prop}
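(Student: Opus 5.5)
The proof goes in three stages: the algebra-level claims, the matrix identities for $\Phi_q$, and the lattice consequences.

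\textbf{Algebra-level claims.} Definiteness is immediate from the discussion after \eqref{eq:prelim-crossed-product-norm}. Here $K_q$ is imaginary quadratic and $c=-1<0$, so
\[
\operatorname{Nrd}_{B_q}(x_0+ux_1)=\operatorname{N}_{K_q/\mathbb Q}(x_0)+\operatorname{N}_{K_q/\mathbb Q}(x_1)>0
\]
for nonzero elements. That $\Gamma_q$ is an order follows from closure under \eqref{eq:prelim-crossed-product-multiplication} with $c=-1$, together with the fact that $\Gamma_q$ is a full $\mathbb Z$-lattice of rank $4$. For maximality, I would compute the discriminant rather than only cite \cite[Ex.~15.5.7]{voight2021quaternion}. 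With the basis $1,\omega_q,u,u\omega_q$, the reduced-trace Gram determinant should have absolute value $q^2$, so $\operatorname{disc}(\Gamma_q)=q$. I then need $B_q\cong\left(\frac{-q,-1}{\mathbb Q}\right)$ to ramify exactly at $\{q,\infty\}$. It ramifies at $\infty$ since it is definite. At odd $\ell\ne q$ both entries are units, so the algebra is split there. At $q$ the Hilbert symbol is $(-1,-q)_q=\left(\frac{-1}{q}\right)=-1$, because $q\equiv3\pmod4$. Parity of the number of ramified places then forces $2$ to be split. An order whose discriminant equals the algebra discriminant is maximal. The main place for error is the Hilbert symbol and discriminant bookkeeping at $2$, which is why I would lean on parity for that prime.

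\textbf{Matrix identities.} Injectivity and $\mathbb Q$-linearity of $\Phi_q$ are clear, since $\sigma_q$ is injective and $x_0,x_1$ are read off from the first column. For multiplicativity I compare $\Phi_q(z)\Phi_q(z')$ entrywise with $\Phi_q$ applied to \eqref{eq:prelim-crossed-product-multiplication} at $c=-1$. The key fact is that complex conjugation on $\sigma_q(K_q)$ realises the automorphism $x\mapsto\bar x$. I would also check $\Phi_q(u)^2=-\mathbf I_2$ and $\Phi_q(u)\Phi_q(x)=\Phi_q(\bar x)\Phi_q(u)$. With $a=\sigma_q(x_0)$ and $b=\sigma_q(x_1)$, the matrix has orthogonal columns of equal squared length $|a|^2+|b|^2$, which gives (i). The determinant is $a\bar a+b\bar b$, which equals the norm form above and gives (ii). Taking the trace in (i) gives (iii).

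\textbf{Lattice consequences.} By (iii) and additivity, $\frac12\|\Phi_q(z)-\Phi_q(z')\|_{\mathrm F}^2=|a-a'|^2+|b-b'|^2$. Hence $z\mapsto(\sigma_q(x_0),\sigma_q(x_1))$ is an isometry of $\mathcal L_q$ onto $\Lambda_q\oplus\Lambda_q$. Full diversity follows because differences of codewords lie in $\mathcal L_q$, and nonzero elements there have determinant equal to a positive reduced norm. For the minimum determinant, $\operatorname{Nrd}$ takes integer values on $\Gamma_q$ because norms of algebraic integers are rational integers. So every nonzero element has $\det\ge1$, and $\det\Phi_q(1)=1$, which gives $\delta_{\min}(\mathcal L_q)=1$. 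One should note that the definition squares the modulus, but since $1^2=1$ this causes no issue.
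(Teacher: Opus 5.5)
Your proposal is correct. For the representation, identities (i)--(iii), the isometry with $\Lambda_q\oplus\Lambda_q$, full diversity and $\delta_{\min}(\mathcal L_q)=1$, you argue as the paper does: direct multiplication, then integrality and positivity of $\operatorname{Nrd}$ on $\Gamma_q$, with equality at $z=1$.

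The real difference is maximality. The paper does not prove it; it recalls it from \cite[Ex.~15.5.7]{voight2021quaternion} after identifying that example's generators with $u$ and $\sqrt{-q}$. You instead verify it through the criterion that an order is maximal when its reduced discriminant equals $\operatorname{disc}(B_q)$, and your computations check out:
\begin{itemize}
\item In the basis $1,\omega_q,u,u\omega_q$, the reduced-trace Gram matrix is block diagonal with blocks $\bigl(\begin{smallmatrix}2&1\\1&(1-q)/2\end{smallmatrix}\bigr)$ and $\bigl(\begin{smallmatrix}-2&-1\\-1&-(q+1)/2\end{smallmatrix}\bigr)$. Their determinants are $-q$ and $q$, so the Gram determinant is $-q^2$ and the reduced discriminant is $q$. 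In Voight's notation $q^2$ is $\operatorname{disc}$ and $q$ is $\operatorname{discrd}$, so your label is slightly off, but the logic is not.
\item At $q$ you have $(-q,-1)_q=\left(\frac{-1}{q}\right)=-1$.
\item The prime $2$ is unramified, either by Hilbert reciprocity as you say, or directly because $-q\equiv1\pmod 4$.
\end{itemize}

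Your route is self-contained and shows exactly where $q\equiv3\pmod4$ enters: in $\mathcal O_q=\mathbb Z[\omega_q]$ and in the ramification at $q$. The paper's citation is shorter and matches its stated intention of recalling maximality rather than claiming it as new.
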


\begin{proof}
The definiteness of $B_q$ follows from the positive norm in \eqref{eq:prelim-crossed-product-norm} with $c=-1$. The algebra and maximal order are those of \cite[Ex.~15.5.7]{voight2021quaternion}, after identifying its generators with $u$ and $\sqrt{-q}$. Thus maximality is recalled here rather than claimed as a new result. The multiplication rule \eqref{eq:prelim-crossed-product-multiplication} verifies the representation, and direct matrix multiplication proves (i)--(iii).

For $0\neq z=x_0+ux_1\in\Gamma_q$, $\operatorname{Nrd}_{B_q}(z)$ is a positive integer. Therefore $|\det\Phi_q(z)|\geq1$.  Equality is attained for $z=1$, which proves $\delta_{\min}(\mathcal L_q)=1$. The determinant of every nonzero difference is also nonzero, so the lattice has full diversity.
\end{proof}

For $p\geq2$, let $\mathcal A_{q,p}:=\mathcal A_{\Lambda_q,p}$ be the energy normalised alphabet. The proposed finite code is
\begin{equation}
 \mathcal C_{q,p} =\left\{ \begin{pmatrix}
  a_0&-\overline{a_1}\\
  a_1& \overline{a_0}
 \end{pmatrix}: a_0,a_1\in\mathcal A_{q,p} \right\}.
 \label{eq:stbc-family-codebook}
\end{equation}
It has $|\mathcal C_{q,p}|=p^4$ and transmits two complex symbols over two channel uses. Thus its rate is one complex symbol per channel use. Since a uniform symbol in $\mathcal A_{q,p}$ has average energy one, \eqref{eq:stbc-energy-convention} holds.

For two distinct codewords, write $\Delta x_k=x_k-x'_k$, $k\in\{0,1\}$.  Their difference satisfies
\begin{align}
 \Delta\mathbf X^{\dagger}\Delta\mathbf X &=\bigl(|\Delta x_0|^2+|\Delta x_1|^2\bigr)\mathbf I_2, \label{eq:stbc-finite-orthogonality}\\
 \det(\Delta\mathbf X) &=|\Delta x_0|^2+|\Delta x_1|^2>0. \nonumber
\end{align}

Hence every code in the family has diversity order $2n_r$.  Notice that the NVD statement in Proposition~\ref{thm:maximal-quaternionic-alamouti-family} applies to the infinite lattice.  The energy normalisation used for a finite alphabet rescales its determinants.

\subsection{Symbol-wise decoding} \label{subsec:stbc-decoding-reference-cases}

The orthogonality in \eqref{eq:stbc-finite-orthogonality} gives the same simple decoder for every $q$.  After Alamouti combining and noise normalisation, the two symbols satisfy

\begin{equation}
 \widetilde Y_{k,b} =\sqrt{\frac{\rho h_b}{2}}\,X_{k,b} +\widetilde Z_{k,b}, \quad k\in\{0,1\},
 \label{eq:stbc-equivalent-scalar-channel}
\end{equation}
where $h_b=\|\mathbf H_b\|_{\mathrm F}^2 \sim\Gamma(2n_r,1), \widetilde Z_{k,b}\sim\mathcal N_{\mathbb C}(0,1).$  Conditioned on $h_b$, the two noise variables are independent. Therefore ML detection reduces to

\begin{equation}
 \widehat X_{k,b} =\arg\min_{x\in\mathcal A_{q,p}} \left|\widetilde Y_{k,b} -\sqrt{\frac{\rho h_b}{2}}\,x\right|^2, \quad k\in\{0,1\}.
 \label{eq:stbc-symbolwise-decoding}
\end{equation}

Thus all cases have the same symbol-wise decoding structure. Their differences lie in the geometry of $\Lambda_q$, rather than in the number of symbols jointly searched by the receiver.

Two cases will be used as references.  When $q=3$, $\mathcal O_3=\mathbb Z[(1+\sqrt{-3})/2]$ and $\Lambda_3$ is hexagonal. This case is the Alamouti--Eisenstein code~\cite{souza2026AESTBC}.  The classical Alamouti code is obtained by using the square coefficient lattice $\mathbb Z[i]$ in \eqref{eq:stbc-family-codebook} \cite{alamouti1998}.  Its natural rational quaternion order is not maximal, so it is a reference rather than a member of the maximal family.

\subsection{Message partitions and receiver side information}
\label{subsec:side-information}

The coefficient rings of the proposed family also provide structured message partitions for broadcasting with receiver side information. Following the layered Alamouti index-coding principle in~\cite[Sec.~IX]{huang2018layered}, Chinese remainder theorem (CRT) partitions~\cite{huang2017lattice} can be applied separately to the two Alamouti symbols. Known message components then restrict each scalar alphabet to a subset, preserving symbol-wise ML detection. The corresponding minimum determinant
comparison follows from \eqref{eq:stbc-finite-orthogonality}.

For $p=\ell$ prime, suppose that $f_q(t)=t^2-t+(q+1)/4$ has two distinct roots in $\mathbb F_\ell$. The CRT gives
\[
 \mathcal O_q/\ell\mathcal O_q
 \simeq \mathbb F_\ell\times\mathbb F_\ell,
\]
and evaluation at the roots labels two $\ell$-ary message components in each symbol. For example, the choices $(q,\ell)=(7,2),(11,3),(19,5)$ give $f_q(t)=t(t-1)$ modulo $\ell$, supporting binary, ternary, and $5$-ary component pairs, respectively. 

A comparison with both Gaussian and Eisenstein alphabets is obtained at $\ell=11$. For $q=7$, the factorisation $f_7(t)=(t-5)(t-7)$ modulo $11$ gives two $11$-ary CRT components, whereas both $\mathbb Z[\mathrm i]/11\mathbb Z[\mathrm i]$ and $\mathcal O_3/11\mathcal O_3$ are isomorphic to $\mathbb F_{121}$. All three scalar alphabets have $121$ points and permit two-component labels. The distinction is that the CRT labels preserve componentwise addition and multiplication. Thus, $q=7$ supports a ring-compatible product partition unavailable from either conventional ring at this modulus.

This algebraic flexibility is related to the lattice constructions over imaginary quadratic rings in~\cite{huang2018lattices}, which connect lattice operations with finite field coding for compute-and-forward. Extending the present Alamouti family to that network setting is beyond the scope of this work.

For an illustration with $p=2$, choose the unit average energy alphabet
\[
 \sqrt{\frac{8}{q+3}}\, \{0,1,\omega_q,\omega_q-1\},
\]
with labels $00,11,01,10$, respectively. These are CRT labels when $q\equiv7\pmod8$. Knowing either bit leaves two candidate points. The minimum squared distance of the full alphabet and the worst conditional minimum squared distance, over either known bit and its value, are
\[
 d_{\min}^2=\frac{8}{q+3},
 \qquad
 d_{\mathrm{SI}}^2=\frac{2(q+1)}{q+3}.
\]
Increasing $q$ decreases the first distance and increases the second. Figure~\ref{fig:side-information-constellations} illustrates this trade-off for $q=3$ and $q=7$. The numerical comparisons below use the point-to-point setting without side information.

\begin{figure}[t]
\centering
\resizebox{\columnwidth}{!}{%
\begin{tikzpicture}[
  x=9.3mm,y=9.3mm,text=black,draw=black,
  font=\fontsize{8}{9}\selectfont,
  kept/.style={
    circle,fill=black,draw=black,
    inner sep=0pt,minimum size=4.2pt},
  removed/.style={
    circle,fill=white,draw=black,line width=.65pt,
    inner sep=0pt,minimum size=4.2pt},
  bit/.style={inner sep=1.5pt}
]
\path[use as bounding box]
  (-2.15,-2.78) rectangle (6.90,1.97);

\foreach \q/\offset/\panel/\distance in
  {3/0/a/{4/3},7/4.75/b/{8/5}} {
  \begin{scope}[shift={(\offset,0)}]

    \pgfmathsetmacro{\s}{sqrt(8/(\q+3))}
    \pgfmathsetmacro{\u}{\s/2}
    \pgfmathsetmacro{\h}{\s*sqrt(\q)/2}
    \pgfmathsetmacro{\v}{\s*(\q-1)/(4*sqrt(\q))}
    \pgfmathsetmacro{\w}{\s*(\q+1)/(4*sqrt(\q))}

    \def\cellpath{
      (\u,\v)--(0,\w)--(-\u,\v)--
      (-\u,-\v)--(0,-\w)--(\u,-\v)--cycle}

    \begin{scope}
      \clip (-2.10,-1.60) rectangle (2.10,1.60);

      \begin{scope}[scale=2]
        \fill[black!10] \cellpath;
      \end{scope}

      \foreach \m in {-3,...,3} {
        \foreach \n in {-2,...,2} {
          \begin{scope}[
            shift={({\s*(\m+\n/2)},{\n*\h})}]
            \draw[black!22,line width=.3pt] \cellpath;
          \end{scope}
        }
      }

      \begin{scope}[scale=2]
        \draw[black!70,line width=.65pt] \cellpath;
      \end{scope}

      \draw[->,black!60,line width=.35pt]
        (-2.06,0)--(2.06,0);
      \draw[->,black!60,line width=.35pt]
        (0,-1.55)--(0,1.56);

      \foreach \m in {-3,...,3} {
        \foreach \n in {-2,...,2} {
          \fill[black!50]
            ({\s*(\m+\n/2)},{\n*\h}) circle (.8pt);
        }
      }
    \end{scope}

    \node[anchor=north east,inner sep=1pt]
      at (2.10,-.06) {$\operatorname{Re}$};
    \node[anchor=south,inner sep=1pt]
      at (0,1.64) {$\operatorname{Im}$};

    \draw[line width=.95pt] (0,0)--(\u,\h);

    \node[kept] at (0,0) {};
    \node[kept] at (\u,\h) {};
    \node[removed] at (\s,0) {};
    \node[removed] at (-\u,\h) {};

    \node[bit,below left=1pt] at (0,0) {$00$};
    \node[bit,above=2pt] at (\s,0) {$11$};
    \node[bit,above=2pt] at (\u,\h) {$01$};
    \node[bit,above=2pt] at (-\u,\h) {$10$};

    \node at (0,-2.04)
      {(\panel) $q=\q,\quad
        d_{\mathrm{SI}}^2=\distance$};
  \end{scope}
}
\end{tikzpicture}%
}

\vspace{-0.3cm}
\caption{Voronoi shaping regions of $2\Lambda_q$, scaled
with the four-point alphabets to unit average energy.
Large markers show the four selected representatives,
with boundary ties resolved as displayed. Filled large
markers remain when $b_1=0$, with squared separation
$d_{\mathrm{SI}}^2$. Both panels use identical axis scales.
The labeling is a CRT labeling only for $q=7$.}
\label{fig:side-information-constellations}
\end{figure}

\section{Shaping and Mutual Information Analysis} \label{sec:shaping-mi}

All members of the family retain the same Alamouti structure, diversity order, and symbol-wise decoder. We now isolate the effect of the coefficient lattice $\Lambda_q$. The Voronoi calculation provides the geometric background for the exact finite energy comparisons in Theorems~\ref{thm:finite-family-optimality} and~\ref{thm:translated-family-optimality}. We then study constellation-constrained mutual information and the pairwise error probability (PEP) implication. All comparisons use the energy convention in \eqref{eq:stbc-energy-convention}.

\subsection{Shaping and finite constellation energy} \label{subsec:shaping-energy}

Recall that $\Lambda_q$ is generated by $(1,0)$ and $(1/2,\sqrt q/2)$. Thus, for $m,n\in\mathbb Z$,
\begin{equation}
 \left|m+n\omega_q\right|^2 =m^2+mn+\frac{q+1}{4}n^2.
 \label{eq:performance-quadratic-form}
\end{equation}
In particular, $d_{\min}(\Lambda_q)=1$. The next result gives the shaping quantities of every member of the family.

\begin{Teo}\label{thm:family-voronoi-energy}
Let $q\equiv3\pmod4$ be prime, and let $\mathcal V_q=\mathcal V(\Lambda_q)$.  Then
\begin{equation}
 \mathcal V_q= \left\{(x,y)\in\mathbb R^2: |x|\leq\frac12,\quad |x|+\sqrt q\,|y|\leq\frac{q+1}{4} \right\}.
 \label{eq:family-voronoi-cell}
\end{equation}
We have, (i) $\operatorname{vol}(\Lambda_q) =\frac{\sqrt q}{2},$ (ii) $\frac{1}{\operatorname{vol}(\Lambda_q)} \int_{\mathcal V_q}|z|^2\,\mathrm dz =\frac{q^2+4q-1}{48q},$ (iii) $G(\Lambda_q)=\frac{q^2+4q-1}{48q^{3/2}}.$ For fixed $q$ and for $p\rightarrow\infty$, the finite alphabet $\mathcal X_{\Lambda,p}$ satisfies
\begin{equation}
 E_s(\Lambda_q,p) =p^2\frac{q^2+4q-1}{48q}+O_q(p), \label{eq:family-finite-energy}
\end{equation}
\end{Teo}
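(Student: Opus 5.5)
The plan has three parts: identify the relevant neighbours to get the Voronoi cell, integrate over the resulting hexagon, and then compare the finite lattice sum with that integral.

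\emph{Step 1: the Voronoi cell.} By \eqref{eq:performance-quadratic-form}, the shortest vectors of $\Lambda_q$ are $\pm1$, of norm $1$. The next ones are $\pm\omega_q$ and $\pm(\omega_q-1)$, of squared norm $(q+1)/4$. I claim that $\mathcal V_q$ is cut out by the six half-planes $|z|\le|z-\lambda|$ with $\lambda\in\{\pm1,\pm\omega_q,\pm(\omega_q-1)\}$. Writing $z=(x,y)$:
\begin{itemize}
\item the condition for $\lambda=\pm1$ is $|x|\le 1/2$;
\item the condition for $\lambda=\pm\omega_q$ is $\pm(x/2+\sqrt q\,y/2)\le (q+1)/8$, and similarly for $\pm(\omega_q-1)$.
\end{itemize}
These four combine into $|x|+\sqrt q|y|\le(q+1)/4$, after checking signs and using the symmetry $x\mapsto -x$, $y\mapsto-y$. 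This yields \eqref{eq:family-voronoi-cell}.

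To see that no other lattice vector cuts the region, I would use the standard fact that $\{1,\omega_q\}$ is an obtuse superbase: $1$, $-\omega_q$ and $\omega_q-1$ sum to zero and have pairwise non-positive inner products, since $\langle 1,\omega_q\rangle=1/2$. For such a superbase the Voronoi relevant vectors are exactly the six listed. As an independent check, the hexagon has area $\frac{\sqrt q}{2}=\operatorname{vol}(\Lambda_q)$: a rectangle of width $1$ and height $\frac{q-1}{2\sqrt q}$ plus two triangles of total area $\frac{1}{2\sqrt q}$ give $\frac{q-1+1}{2\sqrt q}$. Because the hexagon contains $\mathcal V_q$ and has the same area, the two sets are equal. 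This also proves (i), which can alternatively be read off from $\det\mathbf M_q$.

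\emph{Step 2: the second moment.} By the fourfold symmetry I would integrate over the quadrant $0\le x\le 1/2$, $0\le y\le (c-x)/\sqrt q$ with $c=(q+1)/4$. The integral is
\[
4\int_0^{1/2}\Bigl[x^2\tfrac{c-x}{\sqrt q}+\tfrac{(c-x)^3}{3q^{3/2}}\Bigr]dx .
\]
Evaluating this elementary polynomial integral and dividing by $\sqrt q/2$ should give $\frac{q^2+4q-1}{48q}$; I would check the result against the hexagonal value at $q=3$, which is $5/36$. Then (iii) follows from $G=\frac{1}{2\,\operatorname{vol}}\cdot\frac{1}{\operatorname{vol}}\int|z|^2$. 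This gives $\frac{1}{\sqrt q}\cdot\frac{q^2+4q-1}{48q}$, as required.

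\emph{Step 3: the finite energy.} $\mathcal X_{\Lambda_q,p}$ consists of the lattice points in $p\mathcal V_0$, and the translated Voronoi cells $\lambda+\mathcal V_q$ tile the plane. Hence
\[
\sum_{x}|x|^2=\frac{1}{\operatorname{vol}}\sum_x\int_{x+\mathcal V_q}|x|^2\,dz .
\]
Comparing with $\frac{1}{\operatorname{vol}}\int_{p\mathcal V_q}|z|^2$ produces two error sources:
\begin{itemize}
\item Cells straddling the boundary of $p\mathcal V_q$ number $O_q(p)$, and each contributes $O_q(p^2)$.
\item Inside each cell, $|z|^2-|x|^2=2\langle x,z-x\rangle+|z-x|^2$. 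The linear term integrates to zero because $\mathcal V_q$ is centrally symmetric, so each cell contributes $O_q(1)$, and these add up to $O_q(p^2)$ in total.
\end{itemize}
The total error is therefore $O_q(p^3)$. Since $\frac{1}{\operatorname{vol}}\int_{p\mathcal V_q}|z|^2=p^4\cdot\frac{q^2+4q-1}{48q}$, dividing by $p^2$ yields \eqref{eq:family-finite-energy} with remainder $O_q(p)$.

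The main obstacle is Step 1: rigorously excluding further relevant vectors and getting the sign conventions right so that the two constraints combine into exactly $|x|+\sqrt q|y|\le(q+1)/4$. The area argument is the cleanest way to close it.
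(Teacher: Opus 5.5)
Your proposal is correct and follows essentially the same route as the paper: containment of $\mathcal V_q$ in the six-bisector hexagon together with equality of areas, direct integration for (ii)--(iii), and a Riemann-sum/tiling comparison for \eqref{eq:family-finite-energy}, which you spell out more explicitly than the paper does. One harmless inaccuracy: $\pm\omega_q$ and $\pm(\omega_q-1)$ are not the next-shortest vectors once $q\geq19$ (for example, $\pm2$ has squared norm $4<(q+1)/4$), but your argument never relies on that claim.
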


\begin{proof}
Put $b_1=(1,0)$ and $b_2=(1/2,\sqrt q/2)$.  The perpendicular bisectors associated with $\pm b_1, \pm b_2,$ and $ \pm(b_2-b_1)$ define the polygon on the right-hand side of \eqref{eq:family-voronoi-cell}.  Figure~\ref{fig:family-voronoi-cells} illustrates these six bisectors for $q=3$ and $q=7$.  The two panels use the same scale, showing the regular Eisenstein cell and the elongation that occurs for $q>3$.  Notice that, for $q>3$, the four oblique vectors are Voronoi-relevant, although they need not have minimum length. Since the associated half-spaces are necessary Voronoi inequalities, $\mathcal V_q$ is contained in their intersection, denoted by $P_q$.

For $|x|\leq1/2$, define $h_q(x)=\frac{(q+1)/4-|x|}{\sqrt q}.$ Then $P_q$ is described by $|y|\leq h_q(x)$.  Direct integration gives
\begin{align*}
 \int_{-1/2}^{1/2}2h_q(x)\,\mathrm dx &=\frac{\sqrt q}{2},\\
 \int_{-1/2}^{1/2} \left(2x^2h_q(x)+\frac{2}{3}h_q(x)^3\right)\mathrm dx &=\frac{q^2+4q-1}{96\sqrt q}.
\end{align*}
The first value equals $|\det\mathbf M_q|$, which is also the area of a Voronoi region of $\Lambda_q$. The containment and the equality of areas prove (i). Dividing the second expression by the first proves (ii).  Equation (iii) follows from the definition of $G(\Lambda)$. Finally, $p^{-1}\mathcal X_{\Lambda_q,p}$ is a Riemann sum on $\mathcal V_q$ for the scaled lattice $p^{-1}\Lambda_q$. The boundary of $\mathcal V_q$ is polygonal, so the boundary contribution is $O_q(p)$
after returning to the original scale. This proves \eqref{eq:family-finite-energy}.
\end{proof}

\begin{figure}[t]
 \centering
 \begin{tikzpicture}[x=1.4cm, y=1.4cm,
   axis/.style={draw=black!35, line width=0.35pt},
   bisector/.style={draw=black!55, dashed, line width=0.45pt},
   cell edge/.style={draw=blue!70!black, line width=0.9pt},
   basis/.style={draw=black, ->, line width=0.65pt},
   every node/.style={font=\small}
 ]

 \newcommand{\drawvoronoipanel}[3]{%
  \begin{scope}[shift={(#1,0)}]
   \pgfmathsetmacro{\sq}{sqrt(#2)}
   \pgfmathsetmacro{\cval}{(#2+1)/4}
   \pgfmathsetmacro{\ytop}{\cval/\sq}
   \pgfmathsetmacro{\yside}{(\cval-0.5)/\sq}
   \pgfmathsetmacro{\ynbr}{\sq/2}

   \begin{scope}
    \clip (-1.30,-1.55) rectangle (1.30,1.55);

    \draw[axis] (-1.30,0)--(1.30,0);
    \draw[axis] (0,-1.55)--(0,1.55);

    \fill[blue!9]
      (0,\ytop)--(0.5,\yside)--(0.5,-\yside)--
      (0,-\ytop)--(-0.5,-\yside)--(-0.5,\yside)--cycle;

    \draw[bisector] (-0.5,-1.55)--(-0.5,1.55);
    \draw[bisector] ( 0.5,-1.55)--( 0.5,1.55);
    \draw[bisector]
      (-1.30,{(\cval+1.30)/\sq})--(1.30,{(\cval-1.30)/\sq});
    \draw[bisector]
      (-1.30,{(-\cval+1.30)/\sq})--(1.30,{(-\cval-1.30)/\sq});
    \draw[bisector]
      (-1.30,{(\cval-1.30)/\sq})--(1.30,{(\cval+1.30)/\sq});
    \draw[bisector]
      (-1.30,{(-\cval-1.30)/\sq})--(1.30,{(-\cval+1.30)/\sq});

    \draw[cell edge]
      (0,\ytop)--(0.5,\yside)--(0.5,-\yside)--
      (0,-\ytop)--(-0.5,-\yside)--(-0.5,\yside)--cycle;

    \draw[basis] (0,0)--(1,0)
      node[pos=0.67,below] {$b_1$};
    \draw[basis] (0,0)--(0.5,\ynbr)
      node[pos=0.63,right] {$b_2$};

    \fill[blue!70!black] ( 1,0) circle (1.45pt);
    \fill[blue!70!black] (-1,0) circle (1.45pt);
    \fill[blue!70!black] ( 0.5, \ynbr) circle (1.45pt);
    \fill[blue!70!black] (-0.5, \ynbr) circle (1.45pt);
    \fill[blue!70!black] ( 0.5,-\ynbr) circle (1.45pt);
    \fill[blue!70!black] (-0.5,-\ynbr) circle (1.45pt);
    \fill[black] (0,0) circle (1.65pt);
   \end{scope}

   \node at (0,-1.72) {#3};
  \end{scope}
 }

 \drawvoronoipanel{0}{3}{(a) $q=3$}
 \drawvoronoipanel{3.05}{7}{(b) $q=7$}
 \end{tikzpicture}
 \caption{Voronoi geometry of the coefficient lattices at unit minimum
 distance.  The blue points are the six Voronoi-relevant vectors
 $\pm b_1$, $\pm b_2$, and $\pm(b_2-b_1)$, and the dashed lines are
 their perpendicular bisectors.  The corresponding half-spaces that
 contain the origin form $\mathcal V_q$.  The Eisenstein cell for $q=3$
 is regular, whereas the $q=7$ cell is elongated.  Both panels use the
 same Euclidean scale.}
 \label{fig:family-voronoi-cells}
\end{figure}

Theorem~\ref{thm:family-voronoi-energy} produces two different comparisons. The conventional shaping gain compares normalised second moments and is therefore an equal volume comparison. Relative to the square lattice, it is
\begin{equation*}
10\log_{10}\frac{G(\mathbb Z[\mathrm i])}{G(\Lambda_q)} =10\log_{10}\frac{4q^{3/2}}{q^2+4q-1} \quad\text{dB}, \label{eq:conventional-shaping-gain}
\end{equation*}
where $G(\mathbb Z[\mathrm i])=1/12$.  In contrast, using the same cardinality $p^2$ and the same minimum distance gives the asymptotic energy advantage
\begin{equation*}
 \gamma_E(q) =10\log_{10}\frac{E_s(\mathbb Z[\mathrm i],p)}{E_s(\Lambda_q,p)} \rightarrow 10\log_{10}\frac{8q}{q^2+4q-1} \quad\text{dB}.
 \label{eq:fixed-index-energy-gain}
\end{equation*}

Here we used $E_s(\mathbb Z[\mathrm i],p)=p^2/6+O(p)$. The two gains are related by
\begin{equation*}
 \gamma_E(q)=10\log_{10}\frac{G(\mathbb Z[\mathrm i])}{G(\Lambda_q)} +10\log_{10}\frac{2}{\sqrt q}.
 \label{eq:relation-between-shaping-gains}
\end{equation*}

The second term in the sum above is the fundamental volume effect at fixed minimum distance. For $q=3$, the $0.792$ dB energy advantage is the sum of the $0.167$ dB equal volume shaping gain and a $0.625$ dB fundamental volume advantage.

\begin{Cor} \label{cor:eisenstein-family-optimality}
Among all primes $q\equiv3\pmod4$, the Eisenstein case $q=3$ uniquely minimises both $G(\Lambda_q)$ and the leading term of $E_s(\Lambda_q,p)$.  Its shaping gain over the square lattice is $10\log_{10}\frac{3\sqrt3}{5}\simeq0.167\ \text{dB}.$ At equal cardinality and equal minimum distance, its asymptotic energy advantage is $10\log_{10}\frac65\simeq0.792\ \text{dB}.$
\end{Cor}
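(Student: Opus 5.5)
The argument reduces to two one-variable monotonicity checks built on the closed forms in Theorem~\ref{thm:family-voronoi-energy}. By part~(iii), $G(\Lambda_q)=g(q)$ with $g(t)=\frac{t^2+4t-1}{48t^{3/2}}$. By \eqref{eq:family-finite-energy}, the leading coefficient of $E_s(\Lambda_q,p)$ is $e(q)=\frac{q^2+4q-1}{48q}$. I extend both to real $t\geq3$ and show that each is strictly increasing there. Since the admissible primes $q\equiv3\pmod4$ start at $q=3$, this gives unique minimisation at $q=3$.

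For $e$, write $e(t)=\frac{1}{48}\bigl(t+4-t^{-1}\bigr)$. Each of $t$ and $-t^{-1}$ is increasing for $t>0$, so $e$ is strictly increasing with no further work.

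For $g$, write $48\,g(t)=t^{1/2}+4t^{-1/2}-t^{-3/2}$ and differentiate:
\[
48\,g'(t)=\tfrac12 t^{-1/2}-2t^{-3/2}+\tfrac32 t^{-5/2}=\tfrac12 t^{-5/2}\bigl(t^2-4t+3\bigr)=\tfrac12 t^{-5/2}(t-1)(t-3).
\]
This vanishes at $t=3$ and is positive for $t>3$. Hence $g$ is strictly increasing on $[3,\infty)$, and $g(q)>g(3)$ for every admissible prime $q>3$. This is the only step needing care: $g$ is not monotone on all of $(0,\infty)$, but its local minimum on $(1,\infty)$ sits exactly at $t=3$. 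The factorisation above shows this directly.

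The numerical values then follow by substituting $q=3$. First, $G(\Lambda_3)=\frac{20}{48\cdot3\sqrt3}=\frac{5}{36\sqrt3}$, so the shaping gain is
\[
\frac{G(\mathbb Z[\mathrm i])}{G(\Lambda_3)}=\frac{1/12}{5/(36\sqrt3)}=\frac{3\sqrt3}{5}\approx0.167\ \text{dB}.
\]
Second, the asymptotic energy gain formula $\frac{8q}{q^2+4q-1}$ at $q=3$ gives $\frac{24}{20}=\frac65\approx0.792$ dB. The energy statement is asymptotic in $p$, because it compares leading coefficients and the $O_q(p)$ remainder in \eqref{eq:family-finite-energy} is not controlled here.
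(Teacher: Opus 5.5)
Your proposal is correct and follows the paper's proof essentially step for step: the same derivative $\frac{\mathrm d}{\mathrm dq}G(\Lambda_q)=\frac{(q-1)(q-3)}{96q^{5/2}}$, the same rewriting of the leading energy coefficient as $(q+4-q^{-1})/48$, and substitution of $q=3$ to obtain $3\sqrt3/5$ and $6/5$. You also state strict monotonicity explicitly, which the uniqueness claim needs. The paper's proof goes slightly further than the statement by noting that the gains over the square lattice are positive only for $q=3$, via $q<2+\sqrt5$ and $q<(1+\sqrt2)^2$.
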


\begin{proof}
For real $q\geq3$, differentiation of (iii) gives
\(
 \frac{\mathrm d}{\mathrm dq}G(\Lambda_q) =\frac{(q-1)(q-3)}{96q^{5/2}}\geq0.
\)
The coefficient in \eqref{eq:family-finite-energy} can be written as $(q+4-q^{-1})/48$, which is strictly increasing for $q>0$. Consequently, both quantities are minimised at $q=3$.  Substitution of $q=3$ gives $10\log_{10}\frac{3\sqrt3}{5}\simeq0.167\ \text{dB}$ and $10\log_{10}\frac65\simeq0.792\ \text{dB}$.  Finally,
\[
 \frac{8q}{q^2+4q-1}>1 \quad\Longleftrightarrow\quad q<2+\sqrt5.
\]
Only $q=3$ satisfies this inequality in the proposed family. Similarly, $10\log_{10}\frac{G(\mathbb Z[\mathrm i])}{G(\Lambda_q)}$ is positive exactly when $q<(1+\sqrt2)^2$, which again leaves only $q=3$.
\end{proof}

Only the Eisenstein case has positive shaping and asymptotic energy gains over the square lattice. The following gives the exact correction for finite square and hexagonal alphabets.

\begin{Prop}\label{prop:exact-finite-eisenstein-energy}
For every integer $p\geq2$, define
\begin{equation*}
 \chi_3(p)=
 \begin{cases}
  0,&3\mid p,\\
  1,&p\equiv1\pmod3,\\
 -1,&p\equiv2\pmod3.
 \end{cases}
\end{equation*}
Then 
\begin{align*}
E_s(\mathbb Z[\mathrm i],p) &= \begin{cases}
  (p^2-1)/6,&p\text{ odd},\\
  (p^2+2)/6,&p\text{ even},
 \end{cases} \text{ and }\\
E_s(\Lambda_3,p) &=\dfrac{1}{36} \left(5p^2+3-\frac{8\chi_3(p)}{p}\right).
\end{align*}
For odd $p$, the exact finite energy ratio is $\frac{E_s(\mathbb Z[\mathrm i],p)}{E_s(\Lambda_3,p)}=\frac{6(p^2-1)}{5p^2+3-8\chi_3(p)/p},$ which converges to $6/5$.
\end{Prop}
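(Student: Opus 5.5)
The plan rests on one preliminary observation: $E_s(\Lambda,p)$ does not depend on how boundary ties in $\mathcal V_0(\Lambda)$ are broken. If $x$ and $x-p\lambda$ both lie on the boundary of $p\mathcal V(\Lambda)$, then $|x|=|x-p\lambda|$ by the definition of the Voronoi region. Hence $\mathcal X_{\Lambda,p}$ is a set of minimum-norm representatives of $\Lambda/p\Lambda$, and
\[
p^2E_s(\Lambda,p)=\sum_{c\in\Lambda/p\Lambda}\min_{x\in c}|x|^2 .
\]
Equivalently, it is the sum of $|x|^2$ over $\Lambda\cap p\mathcal V(\Lambda)$ with boundary weights. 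An edge point gets weight $1/2$ and a vertex point gets weight $1/3$, since three cells meet at each vertex of the hexagonal tiling. For the square lattice the vertices have weight $1/4$.

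\textbf{Square lattice.} For $\mathbb Z[\mathrm i]$ the computation factorises over the two coordinates. For odd $p$ the representatives are $m,n\in\{-(p-1)/2,\ldots,(p-1)/2\}$. The mean of $m^2$ is $(p^2-1)/12$, and doubling gives $(p^2-1)/6$. For even $p$ the tie-independence lets us take $m\in\{-p/2+1,\ldots,p/2\}$, whose mean square is $(p^2+2)/12$, giving $(p^2+2)/6$.

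\textbf{Hexagonal lattice: structure of the sum.} For $\Lambda_3$ we use \eqref{eq:performance-quadratic-form} with $q=3$, so that $|x|^2=m^2+mn+n^2$. Let $S(p)$ be the weighted sum of this quadratic form over the hexagon $p\mathcal V_3$. The vertices of $\mathcal V_3$ are the deep holes, with coordinates in $\tfrac13\Lambda_3$, so $\mathcal V_3$ is a rational polygon of denominator $3$. The weighted sum of a degree-$2$ polynomial over its dilates is therefore a quasi-polynomial in $p$ of degree $4$ and period dividing $3$. This follows from Ehrhart theory with polynomial weights (Euler--Maclaurin/Brion for polygons). The weighting $1,\tfrac12,\tfrac13$ is exactly the symmetric weighting under which the odd-degree boundary terms cancel, by Ehrhart--Macdonald reciprocity. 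The cancellation is also consistent with central symmetry of the hexagon.

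From this we expect
\[
S(p)=\alpha p^4+\beta p^2+\gamma(p\bmod 3)\,p,
\]
with the following constraints on the coefficients:
\begin{itemize}
\item $\alpha=5/36$, from Theorem~\ref{thm:family-voronoi-energy}(ii) at $q=3$;
\item the $p^2$ coefficient $\beta$ is constant, because edges of $p\mathcal V_3$ have rational directions and contribute non-periodic terms;
\item only the vertex (corner) term can depend on $p\bmod 3$;
\item the periodic part vanishes when $3\mid p$, since the vertices are then lattice points.
\end{itemize}

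\textbf{Fixing the remaining coefficients.} The constants are determined by evaluating small cases directly:
\begin{itemize}
\item $p=2$: the four points $0,1,\omega,\omega-1$ of norms $0,1,1,1$ give $S=3$;
\item $p=3$ and $p=4$ are small hexagon enumerations;
\item $p=1$ is a degenerate check: the origin alone gives $S=0$.
\end{itemize}
Matching these against $\tfrac1{36}(5p^4+3p^2-8\chi_3(p)p)$ fixes $\beta=3/36$ and $\gamma(\cdot)=-8\chi_3/36$. For instance, at $p=2$ this gives $(80+12+16)/36=3$, as required. Dividing by $p^2$ then yields the stated $E_s(\Lambda_3,p)$, and the ratio for odd $p$ and its limit $6/5$ follow at once.

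\textbf{Main obstacle.} The main obstacle is rigorously justifying the quasi-polynomial form, in particular that no $p^3$ or periodic $p^2$ terms survive. If invoking weighted Ehrhart theory feels too heavy, the fallback is to decompose the weighted hexagon into six congruent triangles with apex at the origin. Each triangle is treated with coordinates adapted to $b_1$ and $b_2$. The sum over each triangle is then computed explicitly using $\sum k$, $\sum k^2$, $\sum k^3$ over row ranges, split into the three cases $p\equiv0,1,2\pmod 3$. This is routine but lengthy bookkeeping. The delicate part is locating the boundary rows and the vertex points correctly in each residue class.
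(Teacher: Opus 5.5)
\textbf{There is a genuine gap.} Your tie-rule argument and the square-lattice computation are correct. The problem is the hexagonal part: the form $S(p)=\alpha p^4+\beta p^2+\gamma(p\bmod 3)\,p$ is guessed rather than derived, and the reason you give for $\beta$ being constant is wrong.

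In the Euler--Maclaurin/Brion picture, a vertex of $p\mathcal V_3$ sits at $pv$, and the summand $|x|^2$ there equals $p^2|v|^2$. Vertex terms therefore feed $(p\bmod 3)$-dependent contributions into the $p^2$ and $p^0$ coefficients, not only into the linear one. Rational edge directions cannot be what makes $\beta$ constant.

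A concrete check shows this. Give the six vertices weight $\tfrac12$ instead of $\tfrac13$. The edges are unchanged, and Ehrhart--Macdonald reciprocity still makes the sum even. Yet the sum changes by $[3\mid p]\cdot 6\cdot\tfrac16\cdot\tfrac{p^2}{3}=[3\mid p]\,p^2/3$, so its $p^2$ coefficient is genuinely periodic. Likewise, nothing in your argument excludes a periodic constant term.

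Once those terms are admitted, your four data points no longer suffice. Grant quasi-polynomiality with period $3$, together with $c_4=5/36$, $c_3=0$, and evenness. Then $c_2$ and $c_0$ satisfy $c_j(1)=c_j(2)$, and $c_1$ is odd modulo $3$. That leaves five unknowns: $c_2(0),c_2(1),c_1(1),c_0(0),c_0(1)$.
\begin{itemize}
\item The values at $p=1,2,4$ determine the three unknowns for residues $1$ and $2$.
\item The two residue-$0$ unknowns are constrained only by $p=3$.
\end{itemize}
To close the argument you need an extra value such as $p=6$ (which must give $183$), or a proof that the residue-$0$ constituent vanishes at $p=0$. You also need a precise reference for quasi-polynomiality and reciprocity of this polynomially weighted sum.

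Your six-triangle fallback would work but is not carried out. The paper's computation is more economical. With $a=m-n$ and $b=m+2n$, the half-open cell $p\mathcal V_3$ becomes $-p<a\le p$, $-p<b\le p$, $-p<a+b\le p$ with $a\equiv b\pmod 3$, and $|x|^2=(a^2+ab+b^2)/3$. Substituting $b=a+3r$ reduces everything to sums of integers and squares over explicit ranges. The residue $p\bmod 3$ enters only through the range endpoints, which is where $\chi_3(p)$ comes from.
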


\begin{proof}
For the square lattice, take a half-open set of consecutive integer representatives in each real dimension.  The usual formula for the sum of squares gives $E_s(\mathbb Z[\mathrm i],p)$.  The two cases occur because an even number of integer representatives cannot be symmetric about zero. For $\Lambda_3$, write a point as $m+n\omega_3$ and set $a=m-n$ and $b=m+2n$.  A half-open version of $p\mathcal V_3$ is described by
\begin{align*}
 -p<a\leq p,\qquad &-p<b\leq p,\\
 -p<a+b\leq p,\qquad &a\equiv b\pmod3.
\end{align*}
It contains $p^2$ points, and $|m+n\omega_3|^2=\frac{a^2+ab+b^2}{3}.$

Writing $b=a+3r$ and applying the elementary formulas for sums of integers and squares gives
\[
 \sum_{x\in\mathcal X_{\Lambda_3,p}}|x|^2 =\frac{5p^4+3p^2-8p\chi_3(p)}{36}.
\]
Division by $p^2$ proves $E_s(\Lambda_3,p)$. Points assigned differently on opposite boundary faces have the same norm, so the result does not depend on the fixed tie rule.
\end{proof}

The preceding formulas allow an exact comparison for every finite alphabet size, at fixed cardinality $p^2$ and minimum lattice distance one. Define
\begin{equation*}
 \xi_p=\frac1p\sum_{b=0}^{p-1} \min_{k\in\mathbb Z}|b+kp|^2 =
 \begin{cases}
  (p^2-1)/12,&p\text{ odd},\\
  (p^2+2)/12,&p\text{ even}.
 \end{cases}
 \label{eq:centred-residue-energy}
\end{equation*}

\begin{Teo}\label{thm:finite-family-optimality}
For every admissible $q\geq3$ and every integer $p\geq2$,
\begin{equation}
 E_s(\Lambda_q,p) \geq E_s(\Lambda_3,p)+\frac{q-3}{4}\xi_p.
 \label{eq:finite-family-energy-bound}
\end{equation}
Consequently, $q=3$ is the unique finite-energy minimiser in the proposed family at fixed cardinality $p^2$ and fixed lattice minimum distance one.  Moreover, for every admissible $q\geq7$,
\begin{equation}
 E_s(\Lambda_3,p)\leq E_s(\mathbb Z[\mathrm i],p)<E_s(\Lambda_q,p),
\label{eq:finite-family-square-ordering}
\end{equation}
where the first inequality is an equality only for $p=3$.
\end{Teo}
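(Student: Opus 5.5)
The plan is to compare $\Lambda_q$ with $\Lambda_3$ through the common coordinates $(m,n)\in\mathbb Z^2$. By \eqref{eq:performance-quadratic-form},
\[
|m+n\omega_q|^2=|m+n\omega_3|^2+\frac{q-3}{4}\,n^2 .
\]
The map $\psi:m+n\omega_q\mapsto m+n\omega_3$ is a group isomorphism $\Lambda_q\to\Lambda_3$. It sends $p\Lambda_q$ onto $p\Lambda_3$, so it induces a bijection between the cosets of $\Lambda_q/p\Lambda_q$ and those of $\Lambda_3/p\Lambda_3$. Each of these cosets is labelled by the residue pair $(m \bmod p,\, n\bmod p)$.

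The first structural fact I will use is that $\mathcal X_{\Lambda,p}=\Lambda\cap p\mathcal V_0(\Lambda)$ consists of minimum-norm coset representatives. Indeed, $x\in p\mathcal V(\Lambda)=\mathcal V(p\Lambda)$ means $|x|\le|x-p\lambda|$ for all $\lambda\in\Lambda$. Hence
\[
p^2E_s(\Lambda,p)=\sum_{C\in\Lambda/p\Lambda}\min_{x\in C}|x|^2 ,
\]
and this sum is independent of the tie rule.

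Applying this to $\Lambda_q$ and using the norm identity, each coset $C$ contributes
\[
\min_{x\in C}\Bigl(|\psi(x)|^2+\tfrac{q-3}{4}n(x)^2\Bigr)\;\ge\;\min_{y\in\psi(C)}|y|^2+\tfrac{q-3}{4}\min_{x\in C}n(x)^2 .
\]
In a coset with residue pair $(m_0,n_0)$, the coordinate $n$ runs exactly through $n_0+p\mathbb Z$. So the last minimum is $\min_k|n_0+kp|^2$. Each residue $n_0$ occurs for exactly $p$ cosets, one for each $m_0$. Summing over all cosets and dividing by $p^2$ therefore gives $E_s(\Lambda_3,p)+\frac{q-3}{4}\xi_p$, which is \eqref{eq:finite-family-energy-bound}. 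Since $p\ge2$, the residue $n_0=1$ gives $\xi_p\ge 1/p>0$, so the gap is strictly positive for every $q>3$, and $q=3$ is the unique minimiser.

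For \eqref{eq:finite-family-square-ordering}, I will use the closed forms of Proposition~\ref{prop:exact-finite-eisenstein-energy}, split by the parity of $p$.

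For odd $p$, the left inequality is
\[
36\bigl(E_s(\mathbb Z[\mathrm i],p)-E_s(\Lambda_3,p)\bigr)=p^2-9+\frac{8\chi_3(p)}{p}.
\]
This vanishes at $p=3$, since $\chi_3(3)=0$. It is positive for odd $p\ge5$, because $p^2-9-8/p>0$ there.

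For even $p$, the same difference equals $p^2+9+8\chi_3(p)/p>0$. This settles the left inequality, including the equality case $p=3$.

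For the strict right inequality, note that $E_s(\mathbb Z[\mathrm i],p)=2\xi_p$, since the square alphabet is a product of two centred residue sets. For $q\ge7$ we have $(q-3)/4\ge1$, so \eqref{eq:finite-family-energy-bound} gives $E_s(\Lambda_q,p)\ge E_s(\Lambda_3,p)+\xi_p$. It then suffices to show $E_s(\Lambda_3,p)>\xi_p$. Multiplying by $36$, this reduces to
\[
5p^2+3-\frac{8\chi_3(p)}{p}>3p^2\mp3\cdot(\text{$1$ or $-2$}),
\]
which holds trivially for $p\ge2$.

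The step that needs most care is the first structural fact, namely that the Voronoi quotient alphabet realises a minimum-norm representative in every coset. The second point to check is that separating the minimum of a sum into a sum of minima is legitimate. It is, because the inequality direction is the right one. Everything after that is routine arithmetic with the formulas already established.
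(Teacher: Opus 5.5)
Your proposal is correct and follows essentially the same route as the paper: it splits $|m+n\omega_q|^2$ as the Eisenstein form plus $\frac{q-3}{4}n^2$, uses that the Voronoi alphabet gives minimum-norm coset representatives, and bounds each coset minimum below by a sum of minima via $n\equiv n_0\pmod p$. It then derives the square-lattice ordering from $E_s(\mathbb Z[\mathrm i],p)=2\xi_p$ and $E_s(\Lambda_3,p)>\xi_p$. Your parity-split arithmetic for the left inequality and its equality case $p=3$ fills in what the paper calls a direct comparison, and the stray $\mp$ in your last display should simply read $3p^2-3$ for odd $p$ and $3p^2+6$ for even $p$.
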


\begin{proof}
In lattice coordinates, \eqref{eq:performance-quadratic-form} gives
\begin{equation*}
 m^2+mn+\frac{q+1}{4}n^2
 =m^2+mn+n^2+\frac{q-3}{4}n^2.
\end{equation*}
The Voronoi alphabet selects a minimum norm representative of every coset $(a,b)+p\mathbb Z^2$.  Every representative of this coset has $n\equiv b\pmod p$, and hence $n^2\geq\min_{k\in\mathbb Z}|b+kp|^2$.  Minimising the quadratic form in each coset, summing over all $p^2$ cosets, and dividing by $p^2$ gives \eqref{eq:finite-family-energy-bound}.

For $q\geq7$, this bound gives $E_s(\Lambda_q,p)\geq E_s(\Lambda_3,p)+\xi_p$. The square-lattice formula gives $E_s(\mathbb Z[\mathrm i],p)=2\xi_p$, while direct substitution in $E_s(\Lambda_3,p)$ shows that $E_s(\Lambda_3,p)>\xi_p$.  This proves the strict second inequality in \eqref{eq:finite-family-square-ordering}. Comparing $E_s(\mathbb Z[\mathrm i],p)$ and $E_s(\Lambda_3,p)$ proves the first inequality and its equality condition.
\end{proof}

For completeness, the classical packing and quantisation properties of the hexagonal lattice give the following consequence \cite{Con2013}.

\begin{Cor}\label{thm:broad-eisenstein-optimality}
Let $\Lambda\subset\mathbb R^2$ be any lattice with $d_{\min}(\Lambda)=1$, and define
\begin{equation*}
 \mathcal E(\Lambda) =\frac{1}{\operatorname{vol}(\Lambda)} \int_{\mathcal V(\Lambda)}|z|^2\,\mathrm dz.
 \label{eq:coefficient-lattice-energy}
\end{equation*}
Then $\mathcal E(\Lambda)\geq\frac{5}{36},$ with equality if and only if $\Lambda$ is isometric to the unit minimum distance hexagonal lattice.  Moreover, $E_s(\Lambda,p)=p^2\mathcal E(\Lambda)+O_{\Lambda}(p).$ Consequently, $\liminf_{p\to\infty}\frac{E_s(\Lambda,p)}{p^2} \geq\frac{5}{36}.$ 
\end{Cor}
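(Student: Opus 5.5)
The corollary has three parts, and I will prove them in order: the lower bound on $\mathcal E(\Lambda)$, the energy asymptotic, and then the $\liminf$ bound.

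\textbf{Step 1: split the energy into normalised second moment and volume.}
First I rewrite the quantity using the normalised second moment. By definition,
\[
\mathcal E(\Lambda)=\frac{1}{\operatorname{vol}(\Lambda)}\int_{\mathcal V(\Lambda)}|z|^2\,\mathrm dz=2\,G(\Lambda)\operatorname{vol}(\Lambda).
\]
The inequality then reduces to bounding two factors separately, each by a classical result quoted from \cite{Con2013}.

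\textbf{Step 2: bound each factor.}
The quantiser part is the classical fact that, among all planar lattices, the hexagonal lattice uniquely minimises $G$ up to similarity. Hence $G(\Lambda)\geq G(\Lambda_3)=5/(36\sqrt3)$, using Theorem~\ref{thm:family-voronoi-energy}(iii) with $q=3$, which gives $(9+12-1)/(48\cdot 3^{3/2})=20/(144\sqrt3)$.

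The packing part says that a lattice with $d_{\min}=1$ has $\operatorname{vol}(\Lambda)\geq\sqrt3/2$, with equality only for the hexagonal lattice. I can prove this directly by reduction. Choose a reduced basis $v_1,v_2$ with $|v_1|=1\leq|v_2|$ and $|\langle v_1,v_2\rangle|\leq\tfrac12$. Then
\[
\operatorname{vol}^2=|v_2|^2-\langle v_1,v_2\rangle^2\geq 1-\tfrac14.
\]
Equality forces $|v_2|=1$ and $|\langle v_1,v_2\rangle|=\tfrac12$, which is the hexagonal lattice.

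Multiplying the two bounds gives
\[
\mathcal E(\Lambda)\geq 2\cdot\frac{5}{36\sqrt3}\cdot\frac{\sqrt3}{2}=\frac{5}{36}.
\]
The same value follows from Theorem~\ref{thm:family-voronoi-energy}(ii) at $q=3$: $(9+12-1)/144=5/36$. For equality, both factors must be extremal, and each extremal case is hexagonal. Since $d_{\min}=1$ fixes the scale, $\Lambda$ is then isometric to the unit hexagonal lattice.

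\textbf{Step 3: the energy asymptotic.}
I repeat the Riemann-sum argument from the proof of Theorem~\ref{thm:family-voronoi-energy}. The scaled set $p^{-1}\mathcal X_{\Lambda,p}=p^{-1}\Lambda\cap\mathcal V_0(\Lambda)$ samples the convex polygon $\mathcal V(\Lambda)$ with cell area $\operatorname{vol}(\Lambda)/p^2$. Therefore
\[
\frac{1}{p^2}\sum_{x}|x|^2=\frac{p^2}{\operatorname{vol}(\Lambda)}\int_{\mathcal V}|z|^2\,\mathrm dz+O_\Lambda(p).
\]
The error comes from two sources. Lattice cells that meet the boundary contribute $O(p)$ points, each of norm $O(p^2)$, for a total of $O(p^3)$ in the sum. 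The within-cell variation of $|z|^2$ is $O(p)$ per point. Dividing the sum by $p^2$ leaves an $O_\Lambda(p)$ error in both cases.

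\textbf{Step 4: the $\liminf$ bound.}
Dividing the asymptotic by $p^2$ gives $E_s(\Lambda,p)/p^2\to\mathcal E(\Lambda)$. Combining this with Step 2 yields $\liminf_{p\to\infty}E_s(\Lambda,p)/p^2\geq 5/36$.

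\textbf{Main obstacle.}
The delicate step is the uniqueness in the quantiser bound, that the hexagon uniquely minimises $G$ among planar lattices. I intend to cite it from \cite{Con2013} (Fejes Tóth's theorem) rather than reprove it. If a self-contained argument were needed, I would parametrise reduced bases, compute $G$ of the hexagonal Voronoi cell explicitly, and minimise over the fundamental domain of the modular group.
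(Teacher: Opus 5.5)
Your proposal is correct and follows the paper's proof: you use the factorisation $\mathcal E(\Lambda)=2\operatorname{vol}(\Lambda)G(\Lambda)$, combine the packing bound $\operatorname{vol}(\Lambda)\geq\sqrt3/2$ with the cited hexagonal quantiser bound $G(\Lambda)\geq 5/(36\sqrt3)$, and reuse the Riemann-sum argument from Theorem~\ref{thm:family-voronoi-energy}. The only difference is that you prove the packing bound and its equality case directly from a Lagrange--Gauss reduced basis and write out the $O_\Lambda(p)$ error terms, where the paper cites or only sketches these facts.
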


\begin{proof}
Let $A=\operatorname{vol}(\Lambda)$.  Discs of radius $1/2$ centred at the lattice points have disjoint interiors.  The optimal lattice packing bound in $\R^2$ gives $A\geq\sqrt3/2$, with equality only for the hexagonal lattice.  The optimal two-dimensional lattice quantiser satisfies
\[
 G(\Lambda)\geq G(\Lambda_3)=\frac{5}{36\sqrt3},
\]
again with equality only for the hexagonal lattice.  From the definition of $G(\Lambda)$,
\[
 \mathcal E(\Lambda)=2A G(\Lambda).
\]
Combining the two bounds gives $\mathcal E(\Lambda)\geq\frac{5}{36}$. The Riemann-sum argument used in the proof of Theorem~\ref{thm:family-voronoi-energy} gives the results.
\end{proof}

Corollary~\ref{thm:broad-eisenstein-optimality} gives asymptotic energy optimality among Alamouti constructions with a two-dimensional coefficient lattice and separate Voronoi shaping of the two complex symbols.

We next allow each member of the family to optimise both its complete quotient representatives and their common translation. This tests whether the energy ordering depends on the prescribed representatives or their choice of origin.

\begin{Lema}\label{lem:uniform-residue-variance}
Let $N$ be an integer valued random variable with finite second moment. If $N\bmod p$ is uniform on $\{0,\ldots,p-1\}$, then $\operatorname{Var}(N)\geq\frac{p^2-1}{12}.$
\end{Lema}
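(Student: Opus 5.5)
Write $N=R+pK$, where $R=N\bmod p$ is uniform on $\{0,\ldots,p-1\}$ and $K=(N-R)/p$ is an integer valued random variable. Since $\operatorname{Var}(N)=\min_{c\in\R}\mathbb E[(N-c)^2]$, it suffices to show that $\mathbb E[(N-c)^2]\geq\frac{p^2-1}{12}$ for every real $c$. The plan is to bound $\mathbb E[(N-c)^2]$ from below by conditioning on $R$. For each residue $r$, the variable $N$ lies in the coset $r+p\mathbb Z$. Therefore
\[
\mathbb E[(N-c)^2]=\frac1p\sum_{r=0}^{p-1}\mathbb E\bigl[(N-c)^2\mid R=r\bigr]\geq\frac1p\sum_{r=0}^{p-1}\min_{k\in\mathbb Z}(r+kp-c)^2 .
\]

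The key step is to evaluate the right-hand side. The numbers $r+kp$, for $r\in\{0,\ldots,p-1\}$ and $k\in\mathbb Z$, run through all integers exactly once. So the right-hand side equals $\frac1p\sum_{r}d(c,r+p\mathbb Z)^2$, where $d$ denotes distance to a coset. Equivalently, it is $\frac1p$ times the sum of $\bigl(\operatorname{dist}(c, j+p\mathbb Z)\bigr)^2$ over the $p$ residue classes. Let $t=\{c\}\in[0,1)$ be the fractional part of $c$. The distances to the $p$ classes are then $\{|j-t|: j\in J\}$, where $J$ is a set of $p$ consecutive integers chosen as close as possible to $t$.

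This reduces the lemma to a one-dimensional minimisation. We must minimise $F(t)=\sum_{j\in J_t}(j-t)^2$ over $t$, where $J_t$ is the window of $p$ consecutive integers centred nearest $t$. For a fixed window of $p$ consecutive integers, the sum is minimised when $t$ is at the window's mean. The minimum value is then the variance-type sum $\sum_{i}\bigl(i-\frac{p-1}{2}\bigr)^2=\frac{p(p^2-1)}{12}$. Since any choice of $c$ is dominated by some window, the bound holds for all $c$:
\[
\frac1p\sum_r\min_k(r+kp-c)^2\geq\frac1p\min_{J}\min_{t}\sum_{j\in J}(j-t)^2=\frac{p^2-1}{12}.
\]
Taking $c=\mathbb E[N]$ then gives $\operatorname{Var}(N)\geq\frac{p^2-1}{12}$.

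The only delicate point is the claim that the nearest representatives of the $p$ classes to a real $c$ form a set of $p$ distinct integers. To handle it cleanly, I would argue directly as follows. For each class, pick its nearest representative $n_r$. The $n_r$ are $p$ distinct integers, and for any $p$ distinct integers $\sum(n_r-c)^2\geq\min_t\sum(n_r-t)^2$. This minimum equals $p$ times the empirical variance of the $n_r$. Among $p$ distinct integers, this empirical variance is minimised by consecutive integers, since compressing gaps only reduces spread. This is the step requiring a short justification. One option is a rearrangement argument: after sorting, $n_{(i)}-n_{(j)}\geq i-j$, so the pairwise formula $\sum_{i<j}(n_{(i)}-n_{(j)})^2/p$ for the sum of squared deviations is minimised by consecutive integers. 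This pairwise identity makes the step immediate.
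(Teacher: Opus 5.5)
Your proof is correct and follows essentially the paper's route: condition on the residue to reduce to nearest coset representatives, observe that these are $p$ consecutive integers (the paper takes the half-open interval of length $p$ centred at $t$), and use that the mean squared deviation of $p$ consecutive integers about any $t$ is $\frac{p^2-1}{12}+(\bar a-t)^2$. Your fallback is also valid and avoids the consecutiveness check: it uses only that the nearest representatives $n_r$ are distinct, together with the pairwise identity $\sum_i(n_i-\bar n)^2=\frac1p\sum_{i<j}(n_i-n_j)^2$.
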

\begin{proof}
For any $t\in\mathbb R$, conditioning on the residue gives
\[
 \mathbb E[(N-t)^2]\geq
 \frac1p\sum_{b=0}^{p-1}\min_{k\in\mathbb Z}(b+kp-t)^2.
\]
A half-open interval of length $p$ centred at $t$ contains $p$ consecutive integers, one in each residue class. Each is a nearest representative to $t$ in its class; an endpoint tie has the same squared distance. Denote their average by $\bar a$. Their mean squared distance from $t$ equals
\[
 \frac{p^2-1}{12}+(\bar a-t)^2\geq\frac{p^2-1}{12}.
\]
Taking $t=\mathbb E[N]$ proves the claim.
\end{proof}

Let $\mathscr R_p$ be the collection of complete representative sets for $\mathbb Z^2/p\mathbb Z^2$. Define
\begin{equation}
 E_{q,p}^{\star}
 =\inf_{\substack{S\in\mathscr R_p\\c\in\mathbb C}}
 \frac1{p^2}\sum_{(m,n)\in S}|m+n\omega_q-c|^2.
 \label{eq:translated-representative-energy}
\end{equation}
Here $\omega_q$ is identified with its complex embedding. The infimum permits a different representative set and translation for each $q$.

\begin{Teo}\label{thm:translated-family-optimality}
For every admissible $q\geq3$ and every integer $p\geq2$,
\begin{equation}
 E_{q,p}^{\star}\geq E_{3,p}^{\star}
       +\frac{(q-3)(p^2-1)}{48}.
 \label{eq:translated-family-energy-bound}
\end{equation}
Thus $q=3$ uniquely minimises the optimised energy in the proposed family.
\end{Teo}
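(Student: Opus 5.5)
The plan is to reuse the decomposition from the proof of Theorem~\ref{thm:finite-family-optimality}, now allowing a translation, and to control the extra term with Lemma~\ref{lem:uniform-residue-variance}. Fix $S\in\mathscr R_p$ and $c\in\mathbb C$. Write $c$ in real lattice coordinates as $c=s+t\omega_q$ with $s,t\in\mathbb R$; this is possible because $1$ and $\omega_q$ form an $\mathbb R$-basis of $\mathbb C$. Then $m+n\omega_q-c=(m-s)+(n-t)\omega_q$. By \eqref{eq:performance-quadratic-form}, extended to real coordinates since the quadratic form is polynomial, we get
\[
 |m+n\omega_q-c|^2=(m-s)^2+(m-s)(n-t)+(n-t)^2+\frac{q-3}{4}(n-t)^2 .
\]
The first three terms equal $|m+n\omega_3-c'|^2$, where $c'=s+t\omega_3$. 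The same coefficient pair $(s,t)$ therefore defines an admissible translation for the Eisenstein lattice, and $S$ is the same representative set for $\mathbb Z^2/p\mathbb Z^2$.

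Averaging over $S$ then gives
\[
 \frac1{p^2}\sum_{(m,n)\in S}|m+n\omega_q-c|^2=\frac1{p^2}\sum_{S}|m+n\omega_3-c'|^2+\frac{q-3}{4}\cdot\frac1{p^2}\sum_{S}(n-t)^2 .
\]
The first average is at least $E_{3,p}^\star$ by definition. For the second, let $(m,n)$ be uniform on $S$. Since $S$ is a complete set of representatives of $\mathbb Z^2/p\mathbb Z^2$, each residue of $n$ modulo $p$ occurs exactly $p$ times, so $n\bmod p$ is uniform. Lemma~\ref{lem:uniform-residue-variance} then gives $\mathbb E[(n-t)^2]\geq\operatorname{Var}(n)\geq (p^2-1)/12$. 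The second term is therefore at least $\frac{(q-3)(p^2-1)}{48}$. Taking the infimum over $(S,c)$ yields \eqref{eq:translated-family-energy-bound}.

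Uniqueness follows directly. For an admissible $q>3$ we have $q\geq7$ and $p\geq2$, so the gap is at least $(7-3)\cdot3/48=1/4>0$. It remains to check that the infimum is finite and nonnegative, which is clear since every average is nonnegative and finite sets give finite values. The strict inequality $E^\star_{q,p}>E^\star_{3,p}$ then holds.

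I do not expect a serious obstacle. The step needing care is the transfer of the translation from $\Lambda_q$ to $\Lambda_3$. It must go through the real coordinates $(s,t)$ rather than the complex number $c$ itself; otherwise the identity splitting off $\frac{q-3}{4}(n-t)^2$ fails. Once the translation is expressed in that basis, the bound applies to each pair $(S,c)$ separately, so the infima need not be attained.
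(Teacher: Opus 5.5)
Your proof is correct and follows essentially the paper's route: the quadratic form is split into the Eisenstein form plus $\frac{q-3}{4}$ times the squared second coordinate, and Lemma~\ref{lem:uniform-residue-variance} is then applied to the uniformly distributed residue of $n$. The only difference is that the paper first fixes the mean-minimising translation and writes $F_q(S)=F_3(S)+\frac{q-3}{4}\operatorname{Var}(V)$, whereas you keep an arbitrary $c$ and transfer it to $\Lambda_3$ through its real coordinates $(s,t)$; both are valid.
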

\begin{proof}
Fix $S\in\mathscr R_p$, and let $(U,V)$ be uniform on $S$. The minimising translation is $c=\mathbb E[U+V\omega_q]$. Since $\omega_q=(1+\mathrm i\sqrt q)/2$, the resulting energy is
{\small \begin{align*}
 F_q(S)
 &=\operatorname{Var}(U+V/2)+\frac q4\operatorname{Var}(V)=F_3(S)+\frac{q-3}{4}\operatorname{Var}(V).
\end{align*}}
Every residue of $V$ occurs exactly $p$ times in $S$. Lemma~\ref{lem:uniform-residue-variance} therefore gives
\[
 F_q(S)\geq E_{3,p}^{\star}+\frac{(q-3)(p^2-1)}{48}.
\]
Taking the infimum over $S$ proves \eqref{eq:translated-family-energy-bound}. Its positive gap for $q>3$ proves uniqueness.
\end{proof}

Classical hexagonal quantisation and densest packing explain the normalised second moment result and Corollary~\ref{thm:broad-eisenstein-optimality}~\cite{Con2013}. Theorems~\ref{thm:finite-family-optimality} and~\ref{thm:translated-family-optimality} give exact finite alphabet comparisons, including the case where every $q$ may choose its own representatives and translation. Their positive gaps establish uniqueness in the prime indexed family. These energy results do not establish mutual information or variance optimality: the high-SNR coefficients depend on the complete AWGN information functions.

\subsection{Constellation-constrained mutual information} \label{subsec:constellation-constrained-mi}

We now use the equivalent scalar channel in \eqref{eq:stbc-equivalent-scalar-channel}. To simplify the notation, drop the block index and write

\begin{equation*}
 \widetilde Y=\sqrt{\frac{\rho h}{2}}X+Z, \qquad h\sim\Gamma(2n_r,1),
 \label{eq:mi-scalar-channel}
\end{equation*}
where $X$ is uniform on the energy-normalised alphabet $\mathcal A_{q,p}$, $Z\sim\mathcal N_{\mathbb C}(0,1)$, and the receiver knows $h$.  Put $M=p^2$.  Substitution of $\widetilde Y=\sqrt{\rho h/2}\,X+Z$ in the likelihood gives the exact mutual information.  For compactness, define
{\small 
\begin{equation}
 S_{\rho}(x;h,Z)= \sum_{x'\in\mathcal A_{q,p}} \exp\!\left(-\left|Z+\sqrt{\frac{\rho h}{2}}(x-x')\right|^2+|Z|^2\right). \label{eq:mi-log-sum}
\end{equation}}

Then
{\small \begin{align}
 I_{q,p}(\rho) &=I(X;\widetilde Y\mid h)=\log_2M-\mathbb E_{X,h,Z}\left[\log_2 S_{\rho}(X;h,Z)\right].
 \label{eq:exact-constellation-mi}
\end{align}}

The exponential in \eqref{eq:mi-log-sum} is natural, while the logarithm is base two.  Thus $I_{q,p}(\rho)$ is measured in bits per complex symbol. One Alamouti block carries two such symbols in two channel uses, so its mutual information is $2I_{q,p}(\rho)$ and the mutual information per channel use is again $I_{q,p}(\rho)$. Equation~\eqref{eq:exact-constellation-mi} is a fixed-input quantity. It is therefore called constellation-constrained mutual information.

Since the input alphabet is finite, $I_{q,p}(\rho)$ approaches $\log_2 M$ as the SNR increases. The next result gives the speed of this convergence and an explicit bound on the error of the leading high-SNR approximation.  It is a specialisation of the arbitrary-input fading expansions in \cite{rodrigues2014multiple,ramos2014fading}, with an explicit remainder for the Gamma fading gain. Define the AWGN mutual information deficit $\eta_{q,p}(s) =\log_2M-I\!\left(X;\sqrt{s}X+Z\right).$ Here \(s\geq0\) is a scalar SNR. For each fixed \(s\), the expression \(\sqrt{s}X+Z\) defines an AWGN channel with input \(X\). Now define,

\begin{equation}
 \mu_j(q,p) =\int_0^{\infty}s^{2n_r+j-1}\eta_{q,p}(s)\,\mathrm ds.
 \label{eq:mi-deficit-moments}
\end{equation}
for $j\in\{0,1\}$.

\begin{Teo}\label{thm:controlled-mi-expansion}
For every fixed $q$ and $p$, the moments $\mu_0(q,p)$ and $\mu_1(q,p)$ are finite. Moreover,
$$0 \leq \frac{2^{2n_r}\mu_0(q,p)}{\Gamma(2n_r)\rho^{2n_r}} -\bigl[\log_2M-I_{q,p}(\rho)\bigr] \leq \frac{2^{2n_r+1}\mu_1(q,p)}{\Gamma(2n_r)\rho^{2n_r+1}}.$$
Consequently,
\begin{equation}
 \log_2M-I_{q,p}(\rho) =\frac{2^{2n_r}\mu_0(q,p)} {\Gamma(2n_r)}\rho^{-2n_r} +O(\rho^{-2n_r-1}).
 \label{eq:mi-leading-order}
\end{equation}
Thus, the mutual information deficit decreases as $\rho^{-2n_r}$.
\end{Teo}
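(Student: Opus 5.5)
The plan is to condition on the fading gain and reduce everything to an integral of the AWGN deficit against the Gamma density. Given $h$, the equivalent scalar channel \eqref{eq:stbc-equivalent-scalar-channel} is an AWGN channel at scalar SNR $s=\rho h/2$. Hence
\[
\log_2M-I_{q,p}(\rho)=\mathbb E_h\bigl[\eta_{q,p}(\rho h/2)\bigr]=\frac{1}{\Gamma(2n_r)}\int_0^\infty h^{2n_r-1}e^{-h}\,\eta_{q,p}(\rho h/2)\,\mathrm dh .
\]
Substituting $s=\rho h/2$, so that $h=2s/\rho$ and $\mathrm dh=(2/\rho)\,\mathrm ds$, gives
\[
\log_2M-I_{q,p}(\rho)=\frac{2^{2n_r}}{\Gamma(2n_r)\rho^{2n_r}}\int_0^\infty s^{2n_r-1}e^{-2s/\rho}\,\eta_{q,p}(s)\,\mathrm ds .
\]
The whole statement then follows from the elementary bounds $0\leq 1-e^{-x}\leq x$ with $x=2s/\rho$. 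These show that the difference between $\mu_0$ and the integral above lies between $0$ and $(2/\rho)\mu_1$. Multiplying by the prefactor produces exactly the two-sided inequality, with the upper constant $2^{2n_r+1}\mu_1/(\Gamma(2n_r)\rho^{2n_r+1})$. The expansion \eqref{eq:mi-leading-order} is then immediate.

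The only real work is finiteness of $\mu_0$ and $\mu_1$. Near $s=0$ there is no problem: $0\leq\eta_{q,p}(s)\leq\log_2M$, and $s^{2n_r+j-1}$ is integrable at the origin because $2n_r\geq2$. At infinity, $\eta_{q,p}$ must decay faster than any power. I would first use Fano's inequality, $\eta_{q,p}(s)=H(X\mid Y)\leq h_b(P_e)+P_e\log_2(M-1)$, where $P_e$ is the ML symbol error probability. A union bound gives $P_e\leq (M-1)\exp(-s\,d^2/4)$, where $d$ is the minimum distance of the finite set $\mathcal A_{q,p}$ and $d>0$ because the alphabet consists of $M$ distinct points. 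Since $h_b(P)\leq 2\sqrt{P}$ for small $P$ (or $-P\log_2 P$ explicitly), it follows that $\eta_{q,p}(s)\leq C e^{-c s}$ for large $s$. Both moments are therefore finite.

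A slightly cleaner alternative avoids Fano. Write $\eta_{q,p}(s)=\mathbb E[\log_2 S]$ with the AWGN analogue of \eqref{eq:mi-log-sum}, and use $\log_2(1+u)\leq u/\ln2$. Then
\[
\eta_{q,p}(s)\leq \frac{1}{\ln2}\,\mathbb E\sum_{x'\neq X}\exp\!\bigl(-|Z+\sqrt s(X-x')|^2+|Z|^2\bigr).
\]
Each expectation over $Z$ is a Gaussian integral equal to $\exp(-s|x-x'|^2/2)$. This yields $\eta_{q,p}(s)\leq (M-1)e^{-s d^2/2}/\ln2$ for all $s$, which gives finiteness directly and with explicit constants. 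I would use this route.

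I expect the main (mild) obstacle to be checking that this Gaussian expectation is computed correctly. Expanding the exponent gives $-2\sqrt s\operatorname{Re}(\bar Z\Delta)-s|\Delta|^2$ with $\Delta=x-x'$. Taking the expectation of $e^{-2\sqrt s\operatorname{Re}(\bar Z\Delta)}$ with $Z\sim\mathcal N_{\mathbb C}(0,1)$ contributes $e^{s|\Delta|^2}$, and this appears to cancel the decay. I would therefore redo the bound properly, using Jensen's inequality in the correct direction or the Fano route as a fallback, which is safe. Everything else is a routine change of variables together with the monotone and dominated convergence justifications for exchanging expectation and integral, all of which hold because the integrands are nonnegative.
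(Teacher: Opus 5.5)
Your Fano-based argument is correct and is essentially the paper's proof. It uses the same conditioning on $h$ with the change of variables $s=\rho h/2$, the same bound $0\leq 1-e^{-x}\leq x$, and the same union-bound-plus-Fano argument for the finiteness of $\mu_0$ and $\mu_1$.

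Your suspicion about the ``cleaner'' alternative is justified: the $Z$-expectation is exactly $1$, being the mean of a likelihood ratio, and Jensen only returns the trivial bound $\eta_{q,p}(s)\leq\log_2 M$. For a Fano-free route, use $\ln(1+t)\leq 2\sqrt{t}$ with $\sqrt{\sum_k t_k}\leq\sum_k\sqrt{t_k}$. This gives the Bhattacharyya factor $e^{-s|x-x'|^2/4}$ and hence $\eta_{q,p}(s)\leq 2(M-1)e^{-s d^2/4}/\ln 2$, the same device the paper uses (with $t^{1/4}$) in its variance theorem.
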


\begin{proof}
The fading-averaged deficit is

\begin{equation*}
 \log_2M-I_{q,p}(\rho) =\mathbb E_h\!\left[\eta_{q,p}\!\left(\frac{\rho h}{2}\right)\right].
\end{equation*}

Since the density of $h$ is
$f_h(t)=t^{2n_r-1}e^{-t}/\Gamma(2n_r)$, the change of variables
$s=\rho t/2$ gives the exact identity

\begin{equation}
 \log_2M-I_{q,p}(\rho) =\frac{2^{2n_r}}{\Gamma(2n_r)\rho^{2n_r}} \int_0^{\infty}s^{2n_r-1}e^{-2s/\rho}\eta_{q,p}(s)\,\mathrm ds.
 \label{eq:mi-deficit-transform}
\end{equation}

It remains to justify the moments and the expansion.  Near zero, $0\leq\eta_{q,p}(s)\leq\log_2M$.  Let $d_{\min}>0$ be the minimum distance of $\mathcal A_{q,p}$. At large $s$, the scalar ML error
probability satisfies
\[
 P_{\mathrm e}(s) \leq(M-1)Q\!\left(\sqrt{\frac{s d_{\min}^2}{2}}\right) \leq\frac{M-1}{2}e^{-s d_{\min}^2/4}.
\]

Since $\eta_{q,p}(s)=H(X\mid\sqrt{s}X+Z)$, Fano's inequality shows that $\eta_{q,p}(s)$ is bounded by a polynomial times an exponentially decreasing function.  Hence all the moments in \eqref{eq:mi-deficit-moments} are finite.

For $x\geq0$, $0\leq1-e^{-x}\leq x$. Applying this inequality with $x=2s/\rho$ in \eqref{eq:mi-deficit-transform} gives the results.
\end{proof}

The leading coefficient has a direct operational meaning. Consider two energy-normalised alphabets $\mathcal A$ and $\mathcal B$ with the same cardinality. Let $\rho_{\mathcal A}(\delta)$ denote the SNR required to obtain the small deficit $\delta=\log_2M-I(\rho)$. Then
\begin{equation}
 \lim_{\delta\rightarrow0}10\log_{10}\frac{\rho_{\mathcal B}(\delta)}{\rho_{\mathcal A}(\delta)} = \frac{10}{2n_r}\log_{10}\frac{\mu_0(\mathcal B)}{\mu_0(\mathcal A)}.
 \label{eq:exact-asymptotic-mi-gap}
\end{equation}
This follows immediately by solving the leading term in \eqref{eq:mi-leading-order} for $\rho$.

Equation~\eqref{eq:exact-asymptotic-mi-gap} clarifies the role of the finite alphabet. The coefficient $\mu_0$ depends on its complete AWGN mutual information curve. The shaping and finite energy results above therefore do not by themselves imply larger mutual information at every SNR.

The fixed-alphabet coefficient in \eqref{eq:exact-asymptotic-mi-gap} replaces the distance-spectrum approximation used in \cite{souza2026AESTBC}.

\subsection{Error-probability}
\label{subsec:cer-implication}

We conclude this section by connecting the energy analysis with the CER results. For two distinct codewords, define $D=|\Delta x_0|^2+|\Delta x_1|^2,$ where the symbols belong to the energy normalised alphabet $\mathcal A_{q,p}$. Because of Alamouti orthogonality, the pairwise error probability depends on the codeword difference only through $D$.

Conditioned on $h$, coherent ML detection gives
\begin{equation*}
 P(\mathbf X\rightarrow\mathbf X'\mid h) = Q\!\left(\sqrt{\frac{\rho hD}{4}}\right).
 \label{eq:conditional-pep}
\end{equation*}
Using Craig's representation of the $Q$-function and averaging over $h\sim\Gamma(2n_r,1)$ gives the standard expression \cite{proakis1998fading}
\begin{align}
 P_{\mathrm{pep}}(D,\rho) &=\frac1\pi\int_0^{\pi/2} \left(1+\frac{\rho D}{8\sin^2\theta}\right)^{-2n_r} \mathrm d\theta \nonumber\\
 &=\kappa_{\nu}(\rho D)^{-\nu}+O(\rho^{-\nu-1}),
 \label{eq:exact-averaged-pep}
\end{align}
where $\nu=2n_r$ and $\kappa_{2n_r}=2^{2n_r-1}\binom{4n_r}{2n_r}$.  Therefore, every member of the proposed family achieves diversity order $2n_r$.

For a scalar alphabet with average energy $E$ and minimum distance $d_{\min}$, energy normalisation gives the minimum squared distance $d_{\min}^2/E$.  Since \eqref{eq:exact-averaged-pep} depends on $\rho D$, the SNRs required by two alphabets to obtain the same nearest-neighbour PEP satisfy
\begin{equation*}
 \frac{\rho_1}{\rho_2} =\frac{E_1/d_1^2}{E_2/d_2^2}.
 \label{eq:exact-nearest-neighbour-shift}
\end{equation*}
Theorem~\ref{thm:finite-family-optimality} therefore shows that the Eisenstein case has the smallest averaged nearest-neighbour PEP in the proposed family, for every $p$, under the common minimum distance convention.

The square and Eisenstein lattices both have $d_{\min}=1$. Thus, Proposition~\ref{prop:exact-finite-eisenstein-energy} gives the exact finite nearest-neighbour shift,
\begin{equation*}
 \frac{\rho_{\mathbb Z[\mathrm i]}}{\rho_{\Lambda_3}} =\frac{E_s(\mathbb Z[\mathrm i],p)}{E_s(\Lambda_3,p)} \rightarrow\frac65.
 \label{eq:eisenstein-nearest-neighbour-shift}
\end{equation*}

The limiting SNR advantage is therefore $10\log_{10}(6/5)\simeq0.792$ dB. This conclusion concerns a nearest-neighbour pair. It does not imply the same horizontal shift for the complete CER curve, which also depends on the remaining distances and their multiplicities. The complete CER comparison is therefore left to the numerical results section.

\section{Finite-Blocklength Analysis} \label{sec:finite-blocklength}

The Alamouti matrix is used as an inner space--time modulation block, and an outer code spans independently faded blocks. We retain the shared fading calculation of \cite{souza2026AESTBC}, derive the controlled information variance expansion, and use standard random-coding tools from \cite{polyanskiy2010finite} to describe its reliability implications.

The achievability statements below impose the finite alphabet input constraint. The energy convention \eqref{eq:stbc-energy-convention} holds for uniform modulation and for the i.i.d. random-coding ensemble. The extraction of a deterministic code from this ensemble does not impose a separate per-codeword energy bound or guarantee that its average energy equals the ensemble average.

\subsection{Independent Alamouti blocks}
\label{subsec:fbl-independent-blocks}

Let a packet span $m$ independently faded Alamouti blocks, or $n=2m$ complex channel uses. For a fixed alphabet $\mathcal A_{q,p}$ with $M=p^2$ points, consider $Y_s=\sqrt{s}X+Z$, where $X$ is uniform and $Z\sim\mathcal N_{\mathbb C}(0,1)$. Define its information density and variance by
\begin{equation*}
 \imath_s=\log_2\frac{p_{Y_s|X}(Y_s|X)}{p_{Y_s}(Y_s)},
 \qquad v(s)=\operatorname{Var}(\imath_s).
\end{equation*}
The alphabet indices are omitted in these two quantities. The equivalent channel in \eqref{eq:stbc-equivalent-scalar-channel} has $s=\rho h/2$. Using the AWGN deficit defined in Section~\ref{subsec:constellation-constrained-mi}, $\mathbb E[\imath_s]=\log_2M-\eta_{q,p}(s).$
At $s=\rho h/2$, averaging this conditional mean over $h$ gives $I_{q,p}(\rho)$.

Let $J_b$ be the sum of the two scalar information densities in block $b$. The variables $J_1,\ldots,J_m$ are independent and identically distributed.

\begin{Prop}\label{prop:fbl-block-dispersion}
For independent uniform inputs in the two Alamouti layers, $\mathbb E[J_b]=2I_{q,p}(\rho)$. The information variance per complex channel use is
\begin{align}
 V_{q,p}(\rho)
 &=\frac12\operatorname{Var}(J_b)=\mathbb E_h[v(\rho h/2)]
   +2\operatorname{Var}_h\bigl(\eta_{q,p}(\rho h/2)\bigr).
 \label{eq:fbl-effective-dispersion}
\end{align}
It is measured in bits squared per complex channel use. We also call it the fixed-input dispersion, without identifying it with the channel dispersion optimised over input distributions.
\end{Prop}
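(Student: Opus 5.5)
The argument conditions on the block fading gain $h_b$ and applies the laws of total expectation and total variance to $J_b=\imath^{(0)}+\imath^{(1)}$, where $\imath^{(k)}$ is the information density of layer $k$ in the equivalent scalar channel \eqref{eq:stbc-equivalent-scalar-channel}. Because the receiver knows $h_b$, the information density of the block channel is taken conditionally on $h_b$. Given $h_b=h$, each layer is an AWGN channel $Y_s=\sqrt{s}X+Z$ with $s=\rho h/2$, and its information density is exactly $\imath_s$. The conditional mean is therefore $\mathbb E[\imath_s]=\log_2M-\eta_{q,p}(s)$ and the conditional variance is $v(s)$. Averaging over $h$ gives $\mathbb E[J_b]=2\,\mathbb E_h[\log_2M-\eta_{q,p}(\rho h/2)]=2I_{q,p}(\rho)$, using the identity from the proof of Theorem~\ref{thm:controlled-mi-expansion}.

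For the variance, the first step is to establish conditional independence: given $h_b$, the pairs $(X_{0,b},\widetilde Y_{0,b})$ and $(X_{1,b},\widetilde Y_{1,b})$ are independent. This holds because the inputs are independent and uniform, and the combined noises $\widetilde Z_{0,b},\widetilde Z_{1,b}$ are independent conditionally on $h_b$, as stated after \eqref{eq:stbc-equivalent-scalar-channel}. Hence
\[
\operatorname{Var}(J_b\mid h)=2v(\rho h/2),\qquad \mathbb E[J_b\mid h]=2\log_2M-2\eta_{q,p}(\rho h/2).
\]
The law of total variance then gives
\[
\operatorname{Var}(J_b)=2\,\mathbb E_h[v(\rho h/2)]+4\operatorname{Var}_h\bigl(\eta_{q,p}(\rho h/2)\bigr).
\]
Dividing by two, since a block spans two complex channel uses, yields \eqref{eq:fbl-effective-dispersion}. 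The factor $2$ in front of the fading term, rather than $1$, records that the two layers share the same $h_b$, so their conditional means are perfectly correlated.

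The main obstacle is justifying that the block information density really equals the sum of the two scalar densities. This requires that Alamouti combining is information lossless given $\mathbf H_b$, which I would establish by noting that the combining map is a scaled unitary transformation of the received vector on the signal subspace. The component orthogonal to that subspace is independent noise that does not depend on the input, so it cancels in the likelihood ratio. A further point is that all quantities must be finite. Here $\eta_{q,p}$ is bounded by $\log_2M$, and $v(s)$ is bounded uniformly in $s$ for a finite alphabet: the density $\imath_s$ is at most $\log_2M$, and its lower tail is controlled by Gaussian moments of $|Z|^2$ via the log-sum in \eqref{eq:mi-log-sum}. It follows that $\mathbb E_h[v]$ and $\operatorname{Var}_h(\eta)$ are finite and the total-variance decomposition is legitimate.
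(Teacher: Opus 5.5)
Your proposal is correct and follows the paper's proof. You condition on $h_b$, use that the two layer information densities are conditionally i.i.d.\ to get $\mathbb E[J_b\mid h]=2[\log_2M-\eta_{q,p}(\rho h/2)]$ and $\operatorname{Var}(J_b\mid h)=2v(\rho h/2)$, apply total expectation and total variance, and divide by the two channel uses. Your extra remarks on the losslessness of Alamouti combining and the uniform boundedness of $v(s)$ are not in the paper, where $J_b$ is simply defined as the sum of the two scalar densities so no losslessness argument is needed, but they are sound.
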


\begin{proof}
Conditioned on $h$, the two scalar information densities are independent and have the same distribution. Hence
\begin{align*}
 \mathbb E[J_b\mid h]&=2\bigl[\log_2M-\eta_{q,p}(\rho h/2)\bigr],\\
 \operatorname{Var}(J_b\mid h)&=2v(\rho h/2).
\end{align*}
The law of total expectation gives the mean. The law of total variance, divided by the two channel uses in a block, gives \eqref{eq:fbl-effective-dispersion}. Here the conditional mutual information and its deficit have the same variance, since their sum is the constant $\log_2M$.
\end{proof}

The factor two in the second term of \eqref{eq:fbl-effective-dispersion} accounts for the shared fading gain. The two scalar information densities are independent only after conditioning on $h$.

\subsection{High-SNR information variance}
The following result complements Theorem~\ref{thm:controlled-mi-expansion}. It applies to any fixed finite alphabet of distinct points with uniform independent inputs in the two Alamouti layers.

Put $\nu=2n_r$ and $a=2^\nu/\Gamma(\nu)$. Using the same $v(s)$ as above and the deficit $\eta_{q,p}(s)$, define, for $j=0,1$,
\begin{equation}
 \beta_j(q,p)=\int_0^\infty s^{\nu+j-1}
 \bigl[v(s)+2\eta_{q,p}(s)^2\bigr]\,\mathrm ds.
 \label{eq:variance-asymptotic-moments}
\end{equation}
The moments $\mu_j(q,p)$ remain as defined in \eqref{eq:mi-deficit-moments}.

\begin{Teo}\label{thm:controlled-variance-expansion}
For every fixed $q,p$ with $M=p^2\geq2$, both moments $\beta_0(q,p)$ and $\beta_1(q,p)$ are finite. For every $\rho>0$,
\begin{align}
 0&\leq a\beta_0(q,p)\rho^{-\nu}-V_{q,p}(\rho)
 \nonumber\\
 &\leq 2a\beta_1(q,p)\rho^{-\nu-1}
       +2a^2\mu_0(q,p)^2\rho^{-2\nu}.
 \label{eq:controlled-variance-remainder}
\end{align}
Consequently, as $\rho\to\infty$, $ V_{q,p}(\rho)\sim a\beta_0(q,p)\rho^{-2n_r},$
\begin{align*}
\frac{V_{q,p}(\rho)}{\log_2M-I_{q,p}(\rho)}
 &\longrightarrow\frac{\beta_0(q,p)}{\mu_0(q,p)}.
\end{align*}
\end{Teo}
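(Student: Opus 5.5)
Starting from Proposition~\ref{prop:fbl-block-dispersion}, write $V_{q,p}(\rho)=\mathbb E_h[v(\rho h/2)]+2\mathbb E_h[\eta_{q,p}(\rho h/2)^2]-2\bigl(\mathbb E_h[\eta_{q,p}(\rho h/2)]\bigr)^2$. This splits the variance into a positive part, built from $g(s)=v(s)+2\eta_{q,p}(s)^2$, and a subtracted squared mean. The squared mean is exactly $2[\log_2M-I_{q,p}(\rho)]^2$.

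For the positive part, repeat the change of variables $s=\rho t/2$ from the proof of Theorem~\ref{thm:controlled-mi-expansion}. This gives
\[
\mathbb E_h[g(\rho h/2)]=a\rho^{-\nu}\int_0^\infty s^{\nu-1}e^{-2s/\rho}g(s)\,\mathrm ds .
\]
Since $g\geq0$ and $0\leq 1-e^{-x}\leq x$, the term $a\beta_0\rho^{-\nu}$ exceeds this quantity by an amount in $[0,\,2a\beta_1\rho^{-\nu-1}]$. For the subtracted part, Theorem~\ref{thm:controlled-mi-expansion} gives $0\leq\log_2M-I_{q,p}(\rho)\leq a\mu_0\rho^{-\nu}$. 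Hence $0\leq 2[\log_2M-I_{q,p}]^2\leq 2a^2\mu_0^2\rho^{-2\nu}$.

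Adding the two pieces gives $a\beta_0\rho^{-\nu}-V_{q,p}(\rho)\geq0$, because both differences are nonnegative. It also gives the upper bound $2a\beta_1\rho^{-\nu-1}+2a^2\mu_0^2\rho^{-2\nu}$, which is \eqref{eq:controlled-variance-remainder}. Both remainder terms are $o(\rho^{-\nu})$ since $\nu\geq2$. Provided $\beta_0>0$, this yields $V_{q,p}\sim a\beta_0\rho^{-\nu}$. Positivity holds because $\eta_{q,p}(s)>0$ for every finite $s$: the Gaussian noise has full support, so $X$ is never determined by the output. Dividing by \eqref{eq:mi-leading-order}, where $\mu_0>0$ for the same reason, gives the limiting ratio $\beta_0/\mu_0$.

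The main obstacle is finiteness of $\beta_0$ and $\beta_1$. The term $\eta_{q,p}^2$ is handled as in Theorem~\ref{thm:controlled-mi-expansion}: it is bounded by $(\log_2M)^2$ near zero and decays exponentially at large $s$. The variance $v(s)$ needs its own argument.

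Near $s=0$, $v(s)$ is bounded by $\mathbb E[\imath_s^2]$. Write $\imath_s=\log_2M-\log_2 S(X;s,Z)$, where $S=\sum_{x'}\exp(-|Z+\sqrt s(x-x')|^2+|Z|^2)\geq1$ because the $x'=x$ term equals one. Then $0\leq\log S\leq\log M+\max_{x'}\bigl(2\sqrt s|Z||x-x'|\bigr)$, which has a uniformly bounded second moment for bounded $s$.

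At large $s$, I would use $v(s)\leq\mathbb E[(\log_2S)^2]$, which holds because the variance of $\log_2M-\log_2S$ is at most the second moment of $\log_2S$. Then I would bound $\mathbb E[(\log S)^2]\leq\mathbb E\bigl[(\sum_{x'\neq x}e^{-s|x-x'|^2-2\sqrt s\operatorname{Re}(\bar Z(x-x'))})^2\bigr]$, using $\log(1+u)\leq u$. Each cross term is an exponential of a Gaussian with an explicit mean, so it equals $\exp(-s|\delta|^2-s|\delta'|^2+s|\delta+\delta'|^2)$. This is not small when $\delta$ and $\delta'$ are aligned, so this crude bound fails.

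Instead, I would truncate. On the event $|Z|\leq\sqrt s\,d_{\min}/4$, every exponent is at most $-s d_{\min}^2/2$, so $\log S\leq (M-1)e^{-sd_{\min}^2/2}$. On the complement, which has probability $e^{-sd_{\min}^2/16}$, use the polynomial bound $(\log S)^2\leq 2(\log M)^2+8s|Z|^2D_{\max}^2$. Cauchy--Schwarz then gives $v(s)\leq C(1+s)e^{-cs}$. This makes all moments in \eqref{eq:variance-asymptotic-moments} finite and completes the plan.
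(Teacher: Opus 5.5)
Your proof is correct. The remainder and asymptotic parts coincide with the paper: the split $V_{q,p}=\mathbb E_h[g]-2[\log_2M-I_{q,p}(\rho)]^2$, the bound $0\leq 1-e^{-x}\leq x$, and Theorem~\ref{thm:controlled-mi-expansion} for the subtracted square. You also justify $\beta_0,\mu_0>0$, which the paper only asserts.

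The genuine difference is how you prove that $\beta_0$ and $\beta_1$ are finite. The paper makes no case split. With $L_s=\log_2 S$, it uses $\ln(1+t)\leq 4t^{1/4}$, hence $\ln^2(1+t)\leq 16\sqrt t$, together with $\sqrt{\sum_k t_k}\leq\sum_k\sqrt{t_k}$. Taking the square root before averaging halves each exponent. Each term then has Gaussian expectation $e^{-s|x-x'|^2/4}$, instead of the growing $e^{2s\operatorname{Re}(\bar\delta\delta')}$ that you correctly found after $\log(1+u)\leq u$. This gives $\mathbb E[L_s^2]\leq 16(M-1)(\ln 2)^{-2}e^{-sd_{\min}^2/4}$ uniformly in $s\geq0$. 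The inequality $v(s)+2\eta_{q,p}(s)^2\leq 2\mathbb E[L_s^2]$ then controls both parts of the integrand at once, with an explicit constant and no appeal to Fano.

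Your truncation on $|Z|\leq\sqrt s\,d_{\min}/4$ is also valid: the exponent bound $-sd_{\min}^2/2$, the tail probability $e^{-sd_{\min}^2/16}$, and the Cauchy--Schwarz step all check out. It is more hands-on, but it needs a separate treatment of $\eta_{q,p}^2$ and yields only an unspecified $C(1+s)e^{-cs}$. Your large-$s$ bound in fact holds for every $s\geq0$, so the separate small-$s$ step is redundant.
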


\begin{proof}
Put $L_s=\log_2M-\imath_s$. Conditioned on $X=x$,
\begin{equation*}
 L_s=\log_2\!\left(1+\sum_{x'\ne x}
 e^{-|Z+\sqrt{s}(x-x')|^2+|Z|^2}\right).
\end{equation*}
For $t\geq0$, integrating $(1+u)^{-1}\leq u^{-3/4}$ gives $\ln(1+t)\leq4t^{1/4}$, and hence $\ln^2(1+t)\leq16\sqrt t$. Together with $\sqrt{\sum_k t_k}\leq\sum_k\sqrt{t_k}$ for $t_k\geq0$, Gaussian integration yields
\begin{equation*}
 \mathbb E[L_s^2]\leq
 \frac{16(M-1)}{(\ln2)^2}e^{-s d_{\min}^2/4},
\end{equation*}
where $d_{\min}>0$ is the alphabet's minimum distance. Since $\eta_{q,p}(s)=\mathbb E[L_s]$ and $v(s)=\operatorname{Var}(L_s)$, $ v(s)+2\eta_{q,p}(s)^2\leq2\mathbb E[L_s^2].$ Thus both moments in \eqref{eq:variance-asymptotic-moments} are finite.

Expanding the variance in \eqref{eq:fbl-effective-dispersion} and substituting $s=\rho h/2$ gives
\begin{align*}
 V_{q,p}(\rho)
 &=a\rho^{-\nu}\int_0^\infty s^{\nu-1}e^{-2s/\rho}\bigl[v(s)+2\eta_{q,p}(s)^2\bigr]\,\mathrm ds\\
 &\qquad-2\bigl[\log_2M-I_{q,p}(\rho)\bigr]^2.
\end{align*}
Subtracting this identity from $a\beta_0(q,p)\rho^{-\nu}$, and using $0\leq1-e^{-x}\leq x$ and
\begin{equation*}
 0\leq\log_2M-I_{q,p}(\rho)\leq a\mu_0(q,p)\rho^{-\nu},
\end{equation*}
proves \eqref{eq:controlled-variance-remainder}. The asymptotic statements follow from this bound and \eqref{eq:mi-leading-order}, since $\mu_0(q,p),\beta_0(q,p)>0$ for $M\geq2$.
\end{proof}

Thus the information variance and mutual information deficit have the same diversity exponent, with an explicit limiting ratio determined by different AWGN moments. In the normal approximation below, the deficit gives the gap of the first-order rate to $\log_2M$, while the information variance gives the finite-length penalty. These quantities use the same Alamouti combining and scalar channel model.

\subsection{Achievable rates}
We apply standard random-coding and second-order tools to the information quantities above. Let $L$ be the number of messages and $S_m=\sum_{b=1}^m J_b$ the information density of one packet.

\begin{Prop}\label{prop:fbl-rcus}
For every $m\geq1$ and integer $L\geq2$, there exists an alphabet constrained code with $L$ messages and average packet error probability satisfying
\begin{equation}
 \epsilon\leq\mathbb E\!\left[\min\{1,(L-1)2^{-S_m}\}\right].
 \label{eq:fbl-rcus}
\end{equation}
The expectation is over the uniform inputs, fading, and noise.
\end{Prop}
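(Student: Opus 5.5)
This is the random-coding union bound with the sharper "min with one" term (RCU) of Polyanskiy--Poor--Verd\'u~\cite{polyanskiy2010finite}. I would apply it to the channel that maps a packet of $2m$ alphabet symbols to the receiver output together with the fading gains. The channel input is the vector $\mathbf x=(x_{k,b})_{k\in\{0,1\},\,b\le m}\in\mathcal A_{q,p}^{2m}$. The output is $(\widetilde{\mathbf Y},h_1,\ldots,h_m)$, obtained from \eqref{eq:stbc-equivalent-scalar-channel}. Alamouti combining is an invertible unitary processing of $\mathbf Y_b$ given $\mathbf H_b$, so nothing is lost by working with the equivalent scalar channels. The fading is known at the receiver, so $h_b$ is treated as part of the output. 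Its law does not depend on the input, so it cancels in the likelihood ratio.

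The plan has three steps.

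\textbf{Step 1: construct the ensemble.} Draw $L$ codewords independently, each with i.i.d.\ entries uniform on $\mathcal A_{q,p}$. Every codeword then lies in $\mathcal A_{q,p}^{2m}$, so every code in the ensemble satisfies the alphabet constraint.

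\textbf{Step 2: identify the information density.} Let $P_X$ be uniform on $\mathcal A_{q,p}$. Given the $h_b$, the output law under $P_X^{\otimes 2m}$ is a product of the scalar output laws. Hence the packet information density
\[
\log_2\frac{p(\widetilde{\mathbf Y},\mathbf h\mid\mathbf x)}{p(\widetilde{\mathbf Y},\mathbf h)}
\]
is the sum over $b$ and $k$ of the scalar densities $\imath_{\rho h_b/2}$. This sum is exactly $S_m=\sum_b J_b$.

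\textbf{Step 3: run the RCU argument.} Use the maximum-likelihood decoder, which is equivalent to maximising the information density; break ties in the unfavourable way. Fix the transmitted codeword $\bar{\mathbf X}$ and the output. An error requires some competing codeword $\mathbf X'$ to have information density at least that of $\bar{\mathbf X}$. Apply the union bound over the $L-1$ competitors, truncate the resulting probability at one, and average over $(\bar{\mathbf X},\text{output})$:
\[
\epsilon\le\mathbb E\Bigl[\min\bigl\{1,(L-1)\Pr[\imath(\mathbf X';\mathbf Y)\ge\imath(\bar{\mathbf X};\mathbf Y)\mid\bar{\mathbf X},\mathbf Y]\bigr\}\Bigr].
\]
Since $\mathbf X'$ is independent of $\mathbf Y$, a change of measure gives, for every threshold $\gamma$,
\[
\Pr[\imath(\mathbf X';\mathbf y)\ge\gamma]\le\sum_{\mathbf x'}P(\mathbf x')\mathbf 1\{P(\mathbf y\mid\mathbf x')\ge 2^{\gamma}P(\mathbf y)\}\le 2^{-\gamma}.
\]
Setting $\gamma=S_m$ yields \eqref{eq:fbl-rcus} for the ensemble-average error probability. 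At least one deterministic code in the ensemble does no worse than the average, and that code has average packet error probability bounded as claimed.

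The main point needing care is Step 2: the fading must be included in the output for the product form of the density to hold. Once the fading is part of the output and its input-independent law cancels, the rest is the standard argument, and the caveat in the text about per-codeword energy is consistent with this ensemble extraction.
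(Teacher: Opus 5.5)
Your proposal is correct and follows the paper's argument: the random-coding union bound of Polyanskiy--Poor--Verd\'u applied with the fading gains included in the channel output, followed by the change-of-measure (Markov) bound $\Pr[\imath(\mathbf X';\mathbf y)\geq\gamma]\leq 2^{-\gamma}$ for the independent competing codeword; you write out steps that the paper only cites. One minor inaccuracy, which does not affect the argument: Alamouti combining is unitary only for $n_r=1$, whereas for $n_r\geq 2$ it is a projection that yields a sufficient statistic, and in any case the bound needs only that the receiver can compute the combined outputs from $\mathbf Y_b$ and $\mathbf H_b$.
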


\begin{proof}
Treat one Alamouti block as a memoryless channel use whose input is the pair of transmitted symbols and whose output includes $h_b$. The random-coding union bound in \cite{polyanskiy2010finite}, followed by Markov's inequality for the competing codeword, gives \eqref{eq:fbl-rcus}.
\end{proof}

This bound can be evaluated by Monte Carlo. A simpler expression follows from the second-order approximation.

\begin{Teo}\label{thm:fbl-second-order}
Fix $q$, $p$, $\rho>0$, and $\epsilon\in(0,1)$, with $V_{q,p}(\rho)>0$. There are finite constants $K$ and $n_0$ such that, for every even $n\geq n_0$, an alphabet constrained outer code exists with average error probability at most $\epsilon$ and rate $R=(\log_2L)/n$ satisfying
\begin{equation}
 R\geq I_{q,p}(\rho)-\sqrt{\frac{V_{q,p}(\rho)}{n}}Q^{-1}(\epsilon)
       -K\frac{\log_2n}{n}.
 \label{eq:fbl-second-order-achievability}
\end{equation}
The rate is measured in bits per complex channel use.
\end{Teo}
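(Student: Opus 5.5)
The plan is to treat each Alamouti block as one use of a memoryless super-channel. Its input is the pair $(X_{0,b},X_{1,b})\in\mathcal A_{q,p}^2$, and its output is $(\widetilde Y_{0,b},\widetilde Y_{1,b},h_b)$. Under the block-memoryless model these uses are i.i.d., and the information density of one use is $J_b$. I will then apply the random-coding bound of Proposition~\ref{prop:fbl-rcus}, or equivalently Feinstein's or the dependence-testing bound of \cite{polyanskiy2010finite}, with i.i.d.\ uniform inputs over $m=n/2$ blocks. The error bound is then controlled by the lower tail of $S_m=\sum_{b=1}^m J_b$. Specifically, for any threshold $\gamma$,
\[
 \epsilon\leq \Pr[S_m\leq \log_2(L-1)+\gamma]+2^{-\gamma}.
\]
This follows from \eqref{eq:fbl-rcus} by splitting on the event $\{S_m\leq\log_2(L-1)+\gamma\}$.

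Next I will apply the Berry--Esseen theorem to the i.i.d.\ sum $S_m$. By Proposition~\ref{prop:fbl-block-dispersion}, $\mathbb E[J_b]=2I_{q,p}(\rho)$ and $\operatorname{Var}(J_b)=2V_{q,p}(\rho)>0$. I must also check that the third absolute moment $T=\mathbb E|J_b-\mathbb E J_b|^3$ is finite. Conditioned on $h$ and $X$, the information density equals $\log_2 M-L_s$. As in the proof of Theorem~\ref{thm:controlled-variance-expansion}, $0\leq L_s\leq\log_2(1+\sum_{x'\neq x}e^{\cdots})$, so $0\le L_s\le \log_2 M + c(|Z|^2 + s\,\mathrm{diam}^2)$ after bounding the exponent $|Z|^2-|Z+\sqrt s(x-x')|^2\le 2\sqrt s|Z||x-x'|$. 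Hence $|J_b|$ is dominated by a polynomial in $|Z_0|,|Z_1|,\sqrt{\rho h}$. All moments of these variables are finite because $h\sim\Gamma(2n_r,1)$ and the $Z$ are Gaussian, which gives $T<\infty$. This moment check is the step that needs the most care, though it is routine.

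With $B=6T/(2V_{q,p})^{3/2}$, Berry--Esseen gives
\[
 \Pr[S_m\leq 2mI-\sqrt{2mV}\,x]\leq Q(x)+B/\sqrt m,
\]
where $I=I_{q,p}(\rho)$ and $V=V_{q,p}(\rho)$. I will choose $\gamma=\tfrac12\log_2 m$ and $x=Q^{-1}(\epsilon-B/\sqrt m-m^{-1/2})$, which is well defined for $m\geq m_0$. I then set $\log_2(L-1)=2mI-\sqrt{2mV}\,x-\gamma$, rounding $L$ down to an integer. This guarantees error at most $\epsilon$. Dividing by $n=2m$ and using $\sqrt{2mV}/(2m)=\sqrt{V/n}$ gives the leading terms of \eqref{eq:fbl-second-order-achievability}. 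The Taylor expansion $Q^{-1}(\epsilon-c/\sqrt m)=Q^{-1}(\epsilon)+O(1/\sqrt m)$ contributes an $O(1/n)$ correction to the rate. The terms $\gamma/n$ and the rounding of $L$ contribute $O(\log_2 n/n)$. All of these are absorbed into $K\log_2 n/n$ for $n\geq n_0$.

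Finally, the code is alphabet constrained by construction, since codewords are drawn from $\mathcal A_{q,p}$. Averaging over the ensemble ensures that a deterministic code achieving the bound exists. The restriction to even $n$ simply reflects $n=2m$.
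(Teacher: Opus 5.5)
Your proposal is correct and follows essentially the same route as the paper: split the relaxed RCU bound at a threshold, apply Berry--Esseen to $S_m$ (mean $nI_{q,p}(\rho)$, variance $nV_{q,p}(\rho)$), and absorb a logarithmic back-off, with your choice $\gamma=\tfrac12\log_2 m$ in place of the paper's $\tau=\log_2 m$ making no difference. You also supply details the paper only asserts: the polynomial domination of $|J_b|$ that gives a finite third moment, and the explicit choice $x=Q^{-1}\bigl(\epsilon-(B+1)/\sqrt m\bigr)$ with its $O(1/n)$ Taylor correction.
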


\begin{proof}
The sum $S_m$ has mean $nI_{q,p}(\rho)$ and variance $nV_{q,p}(\rho)$. Each $J_b$ has a finite third absolute moment, since the alphabet is finite and the Gaussian noise and Gamma fading have finite moments of every order. For any $\tau>0$, \eqref{eq:fbl-rcus} gives
\begin{equation*}
 \epsilon\leq\Pr\!\left\{S_m\leq\log_2(L-1)+\tau\right\}+2^{-\tau}.
\end{equation*}
Taking $\tau=\log_2m$ and applying the Berry--Esseen theorem to $S_m$, a sufficiently large logarithmic back-off gives \eqref{eq:fbl-second-order-achievability}. This is the second-order random-coding argument of \cite{polyanskiy2010finite} applied to independent Alamouti blocks.
\end{proof}

Omitting the logarithmic term gives the normal approximations
\begin{align*}
 R_{\mathrm{NA}}^{(q,p)}
 &=I_{q,p}(\rho)-\sqrt{\frac{V_{q,p}(\rho)}{n}}Q^{-1}(\epsilon),\\
 \epsilon_{\mathrm{NA}}^{(q,p)}
 &=Q\!\left(\sqrt{\frac{n}{V_{q,p}(\rho)}}
       \bigl[I_{q,p}(\rho)-R\bigr]\right).
\end{align*}
The dependence on $n$, $\rho$, and the target error probability or rate is left implicit. These are approximations, equation \eqref{eq:fbl-rcus} gives a nonasymptotic achievability bound.

At common $n$, $\epsilon$, and $\rho$, comparing the Eisenstein case with another $q$ gives
\begin{align*}
 R_{\mathrm{NA}}^{(3,p)}-R_{\mathrm{NA}}^{(q,p)}
 &=I_{3,p}(\rho)-I_{q,p}(\rho)\\
 &\quad-\frac{Q^{-1}(\epsilon)}{\sqrt n}
 \left(\sqrt{V_{3,p}(\rho)}-\sqrt{V_{q,p}(\rho)}\right).
\end{align*}
The first term is the mutual information contribution and the second is the dispersion contribution. For $\epsilon<1/2$, if $I_{3,p}(\rho)\geq I_{q,p}(\rho)$ and $V_{3,p}(\rho)\leq V_{q,p}(\rho)$, the normal approximation favours the Eisenstein case for every $n$.

\section{Numerical Results} \label{sec:numerical-results}

This section evaluates the finite-alphabet energy, constellation-constrained mutual information, codeword error rate, and finite-blocklength performance of the proposed family.  We first describe the common numerical setup and verify the finite-energy results of Section~\ref{subsec:shaping-energy}.

\subsection{Simulation setup and finite-energy comparison} \label{subsec:numerical-setup-energy}

We consider the Eisenstein case $q=3$, the cases $q=7$ and $q=11$, and the square lattice $\mathbb Z[\mathrm i]$.  The case $q=3$ is the proposed optimum, while $q=7$ is the next case of the quaternionic family. The case $q=11$ is included to confirm the result in Theorem~\ref{thm:finite-family-optimality}.  The communication results use $p=4$ and $M=16$. The energy comparison also includes $p=8$ and $p=16$ to show the approach to the asymptotic result.

For all channel simulations, we use the model in \eqref{eq:stbc-channel}. The entries of $\mathbf H_b$ and $\mathbf Z_b$ are independent $\mathcal{CN}(0,1)$ random variables.  The channel remains constant during the two time slots of one Alamouti block and changes independently between blocks.  Perfect channel knowledge is available at the receiver, while the transmitter has no channel knowledge.  Unless stated otherwise, $n_r=1$, and the horizontal axis is $10\log_{10}\rho$, where $\rho$ is the average receive SNR per receive antenna.  The same channel and noise realisations are used across the compared alphabets whenever possible.

To verify the analytical energy results, we use the quotient alphabet defined in Section~\ref{sec:preliminaries},
\begin{equation}
 \mathcal X_{\Lambda,p}=\Lambda\cap p\mathcal V_0(\Lambda), \qquad \mathcal A_{\Lambda,p} =\frac{\mathcal X_{\Lambda,p}}{\sqrt{E_s(\Lambda,p)}}.
 \label{eq:numerical-exact-alphabet}
\end{equation}

One minimum norm representative is selected from every coset of $\Lambda/p\Lambda$. Ties on the Voronoi boundary are resolved by choosing the lexicographically smallest pair of lattice coefficients $(m,n)$.  This fixed rule gives $|\mathcal X_{\Lambda,p}|=p^2$. The energy is computed by exact rational arithmetic and is then used in \eqref{eq:numerical-exact-alphabet}. Consequently, $\mathbb E[|X|^2]=1$ and $\mathbb E[\|\mathbf X\|_{\mathrm F}^2]=4$ for every code.

The exact quotient representatives need not have zero mean. No centring is applied to the results in Table~\ref{tab:finite-energy-comparison}, because the purpose of this table is to verify Proposition~\ref{prop:exact-finite-eisenstein-energy} and Theorem~\ref{thm:finite-family-optimality}. In the later comparison with conventional square QAM and full-rate space--time codes, every modulation alphabet is instead translated to zero mean and renormalised. 

For a lattice $\Lambda$, define the finite energy advantage over the square lattice by
\begin{equation*}
 \gamma_{E,p}(\Lambda) =10\log_{10}\frac{E_s(\mathbb Z[\mathrm i],p)}{E_s(\Lambda,p)}\quad\text{dB}.
 \label{eq:numerical-finite-energy-gain}
\end{equation*}
Thus, a positive value denotes an energy advantage. The values in Table~\ref{tab:finite-energy-comparison} are obtained by enumerating the $p^2$ cosets. For the square and Eisenstein lattices, the enumeration also
agrees exactly with Proposition~\ref{prop:exact-finite-eisenstein-energy}.

\begin{table*}[htpb]
\caption{Exact finite-alphabet energy at fixed cardinality and minimum distance.} 
\label{tab:finite-energy-comparison}
\centering
\footnotesize
\setlength{\tabcolsep}{5pt}
\begin{tabular}{cccccccc}
\hline
Lattice & $E_s(\Lambda,4)$ & $\gamma_{E,4}$ (dB) & $E_s(\Lambda,8)$ & $\gamma_{E,8}$ (dB) & $E_s(\Lambda,16)$ & $\gamma_{E,16}$ (dB) & $\gamma_E$ (dB)\\
\hline
$\mathbb Z[\mathrm i]$ & $3$ & $0$ & $11$ & $0$ & $43$ & $0$ & $0$\\
$\Lambda_3$ & $9/4$ & $1.249$ & $9$ & $0.872$ & $285/8$ & $0.817$ & $0.792$\\
$\Lambda_7$ & $15/4$ & $-0.969$ & $29/2$ & $-1.200$ & $58$ & $-1.300$ & $-1.326$\\
$\Lambda_{11}$ & $21/4$ & $-2.430$ & $20$ & $-2.596$ & $159/2$ & $-2.669$ & $-2.704$\\
\hline
\end{tabular}
\end{table*}

The results confirm the finite-alphabet ordering proved in Theorem~\ref{thm:finite-family-optimality}.  For all tested values of $p$, $\Lambda_3$ has the smallest energy among the proposed cases, whereas $\Lambda_7$ and $\Lambda_{11}$ require more energy than the square lattice. For $p=4$, the exact alphabet advantage of $\Lambda_3$ is $1.249$ dB. It decreases to $0.872$ dB for $p=8$ and $0.817$ dB for $p=16$, approaching the asymptotic value $10\log_{10}(6/5)\simeq0.792$ dB. These finite-energy gains are different from the conventional equal volume shaping gain of $0.167$ dB. 

\subsection{Mutual information and high-SNR behaviour}
\label{subsec:numerical-mi}

We evaluate \eqref{eq:exact-constellation-mi} for the uncentred quotient alphabets with $p=4$, $M=16$, and $n_r=1$, comparing $\Lambda_3$, the square alphabet, and $\Lambda_7$. The average over input symbols is exact. Gaussian noise is integrated by Gauss--Hermite quadrature with $32$ nodes per real dimension, while \eqref{eq:mi-deficit-transform} uses $360$ Gauss--Legendre nodes after the substitution $s=t/(1-t)$. Increasing these orders to $36$ and $440$ changes the mutual information by less than $2\times10^{-5}$ bits per channel use and $\mu_0$ by less than $10^{-4}$ in relative value.

The same quadrature gives the moments in \eqref{eq:mi-deficit-moments}. The fitted exponents use the last four SNR points between $34$ and $40$ dB:

\begin{equation*}
\begin{array}{c|ccc}
 \text{Alphabet} & \mu_0 & \mu_1 & \text{exponent}\\ \hline
 \Lambda_3             &195.904&3251.642&-1.993\\
 \mathbb Z[\mathrm i]  &295.351&6233.034&-1.991\\
 \Lambda_7             &285.665&6624.778&-1.990
\end{array}
\end{equation*}

The exponents are close to $-2$, as predicted by Theorem~\ref{thm:controlled-mi-expansion}. For $n_r=1$, its bound is

\begin{equation}
 0\leq 4\mu_0(q,p)-\rho^2\!\left[\log_2M-I_{q,p}(\rho)\right] \leq\frac{8\mu_1(q,p)}{\rho}.
 \label{eq:numerical-controlled-mi}
\end{equation}

Figure~\ref{fig:mi-high-snr} shows the scaled deficit approaching $4\mu_0(q,p)$ from below, consistently with \eqref{eq:numerical-controlled-mi}. The limiting Eisenstein advantage over the square alphabet is
 \[
 \gamma_{\mathrm{MI},4} =5\log_{10}\frac{\mu_0(\mathbb Z[\mathrm i],4)}{\mu_0(3,4)}\simeq0.89148\ \mathrm{dB}.
 \]
 
This gain applies to the fixed, uncentred alphabets. Although $\Lambda_7$ has greater finite energy than the square alphabet, its smaller $\mu_0$ gives a smaller leading high-SNR deficit. Thus energy ordering alone does not determine mutual information ordering.

\begin{figure}[t]
 \centering
 \includegraphics[width=0.98\columnwidth]{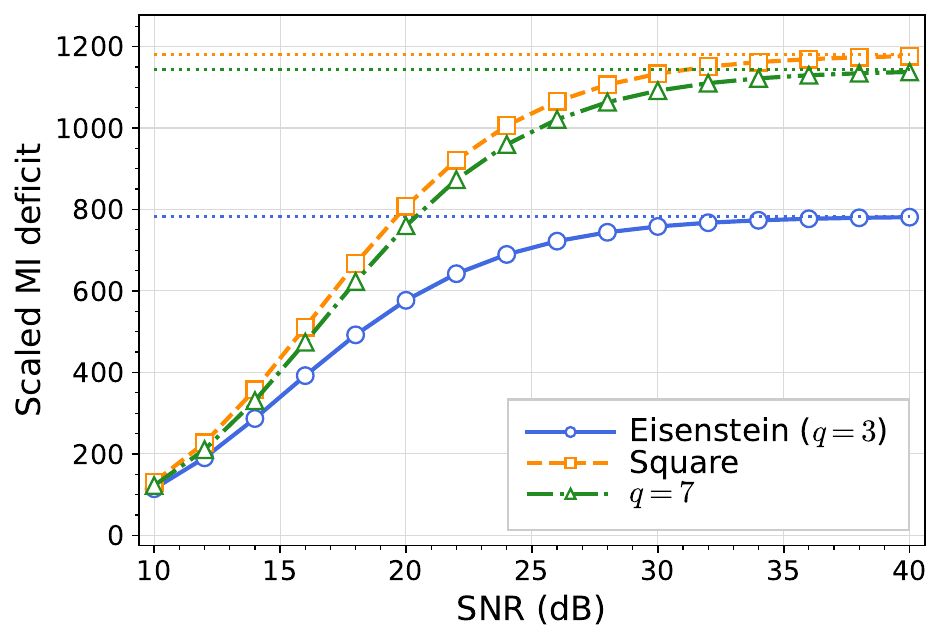}
 \caption{Scaled mutual information deficit $\rho^2[\log_2M-I_{q,p}(\rho)]$ for the exact, uncentred $p=4$ quotient alphabets and $n_r=1$. The coloured dotted levels give the leading coefficients $4\mu_0(q,p)$.}
 \label{fig:mi-high-snr}
\end{figure}

\subsection{Codeword-error-rate performance}
\label{subsec:numerical-cer}

We compare the same three alphabets with $p=4$ and $n_r=1$, at $4$ bits per channel use and unit average symbol energy. The equivalent channel in \eqref{eq:stbc-equivalent-scalar-channel} uses one fading gain for both symbols and independent noise samples. Detection follows \eqref{eq:stbc-symbolwise-decoding}, and a codeword error occurs if either symbol is detected incorrectly. We consider the uncentred quotient alphabets and their centred, renormalised versions,

\begin{equation}
 \mathcal A^{\mathrm c}_{\Lambda,p}
 =\frac{\mathcal X_{\Lambda,p}-\bar x_{\Lambda,p}}
 {\sqrt{E_s(\Lambda,p)-|\bar x_{\Lambda,p}|^2}},\qquad
 \bar x_{\Lambda,p}=\frac1M\sum_{x\in\mathcal X_{\Lambda,p}}x.
 \label{eq:numerical-centred-alphabet}
\end{equation}

The simulations use $10\log_{10}\rho\in\{10,12,\ldots,32\}$ dB and fixed sample sizes selected in an independent pilot calculation, from $10^5$ to $5\times10^6$ blocks. Common symbol indices, fading gains, and noise samples are used across alphabets. 

\begin{table}[htpb]
\caption{Required SNR in dB at a codeword-error rate of $10^{-3}$ for $p=4$ and $n_r=1$, obtained by logarithmic interpolation.}
\label{tab:numerical-cer-summary}
\centering
\footnotesize
\setlength{\tabcolsep}{7pt}
\begin{tabular}{lcc}
\hline
Alphabet & Uncentred & Centred\\
\hline
$\Lambda_3$ & $27.65$ & $27.52$\\
$\mathbb Z[\mathrm i]$ & $28.59$ & $27.79$\\
$\Lambda_7$ & $28.55$ & $28.47$\\
\hline
\end{tabular}
\end{table}

Table~\ref{tab:numerical-cer-summary} gives the SNRs at a CER of $10^{-3}$, obtained by logarithmic interpolation. The Eisenstein case requires the lowest SNR under both conventions. Centring reduces its advantage over the square alphabet from approximately $1$ dB to $0.3$ dB; its advantage over $\Lambda_7$ remains approximately $0.9$ dB. Thus the finite-alphabet mean affects the complete-code comparison.

\subsection{Finite-blocklength performance}
\label{subsec:numerical-fbl}

We use the exact, uncentred $p=4$ quotient alphabets, with $M=16$ and $n_r=1$. Each packet spans $m$ independently faded Alamouti blocks, giving $n=2m$ complex channel uses. The two symbols in a block share the gain $h_b$.

Information-density moments are averaged over $h_b\sim\Gamma(2,1)$ using $24$ Gauss--Hermite nodes per real noise dimension and piecewise Gauss--Legendre quadrature with $14$ nodes per fading interval.
We calculate $I_{q,4}(\rho)=\frac12\mathbb E[J_b],$ and $ V_{q,4}(\rho)=\frac12\operatorname{Var}(J_b).$
At $10$ dB, treating the layers as independently faded would omit about $21\%$ of the information variance.

\begin{table}[htpb]
\caption{Finite-blocklength quantities for the exact $p=4$ alphabets at $10$ dB, $n=128$, and $\epsilon=10^{-2}$. Rates are in bits per complex channel use.}
\label{tab:numerical-fbl-summary}
\centering
\footnotesize
\setlength{\tabcolsep}{3.5pt}
\begin{tabular}{ccccc}
\hline
Alphabet & $I$ & $V$ & $R_{\mathrm{NA}}$ & $R_{\rm (RCU)}$\\
\hline
$\Lambda_3$            & $2.855101$ & $2.608264$ & $2.523019$ & $2.484375$\\
$\mathbb Z[\mathrm i]$ & $2.691079$ & $2.666319$ & $2.355321$ & $2.320313$\\
$\Lambda_7$            & $2.775995$ & $2.571285$ & $2.446276$ & $2.414063$\\
\hline
\end{tabular}
\end{table}

Figure~\ref{fig:fbl-family-p4} shows that the variance ordering changes with SNR: $\Lambda_3$ has the largest value at low SNR and the smallest at higher SNR, whereas $\Lambda_7$ has the smallest value at $10$ dB. Energy optimality therefore does not imply minimum variance at every SNR.

In Table~\ref{tab:numerical-fbl-summary}, $R_{\mathrm{NA}}$ is the second order normal-approximation rate. For the relaxed RCU bound in \eqref{eq:fbl-rcus}, we generated $3\times10^5$ independent packets at $n=128$ and tested $L=2^k$ messages. We define $R_{\rm (RCU)}$ as the largest tested rate whose empirical-Bernstein upper confidence limit is at most $10^{-2}$, with simultaneous $99\%$ confidence over the tested rates. Both evaluations favour $\Lambda_3$. For the normal approximation, its larger mutual information offsets its higher variance relative to $\Lambda_7$ at this SNR. The RCU values evaluate a nonasymptotic random-coding bound, rather than an implemented outer code.

\begin{figure}[t]
 \centering
 \includegraphics[width=0.98\columnwidth]{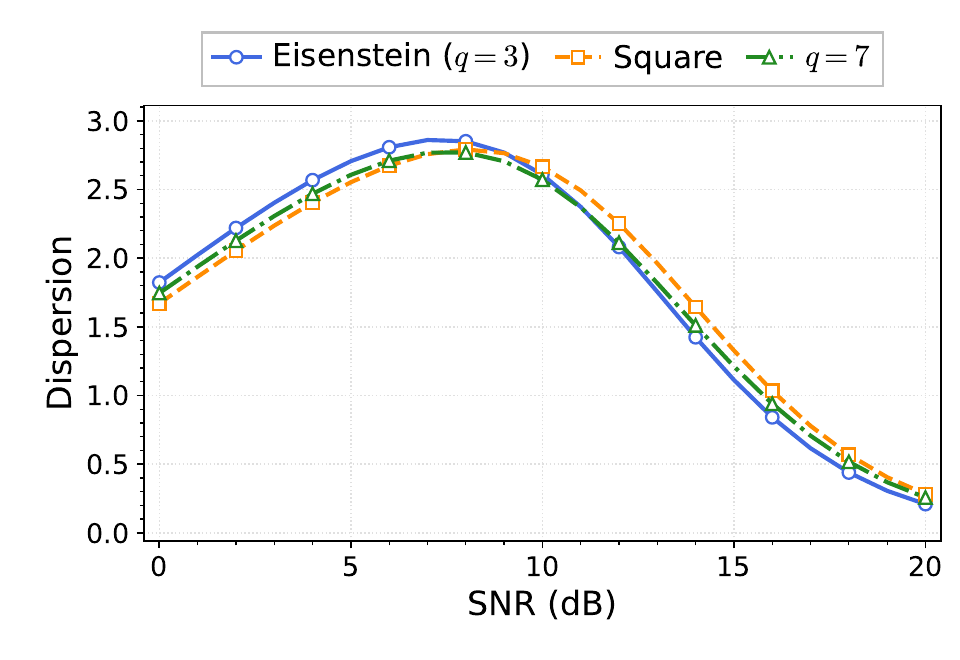}
 \caption{Constellation-constrained information variance versus SNR for the exact, uncentred $p=4$ quotient alphabets and $n_r=1$.}
 \label{fig:fbl-family-p4}
\end{figure}

\subsection{Comparison with standard space--time codes}
\label{subsec:numerical-standard-stbc}

We compare classical Alamouti, $\Lambda_3$, $\Lambda_7$, Golden~\cite{belfiore2005golden}, and Silver~\cite{hollanti2008silver} over the coherent block-fading channel in \eqref{eq:stbc-channel}, with $n_r=2$. The Alamouti-type codes use centred 16-point alphabets from \eqref{eq:numerical-centred-alphabet}; Golden and Silver use unit-energy QPSK. The alphabet sizes are $M_{\rm A}=16$ and $M_{\rm F}=4$, respectively, with $\log_2M_{\rm A}=2\log_2M_{\rm F}$: every code contains $256$ matrices and transmits $4$ bits per channel use. We normalise each codebook by $\widetilde{\mathbf X}=\sqrt{\frac{4}{\mathbb E[\|\mathbf X\|_{\rm F}^{2}]}}\mathbf X,$ so that $\mathbb E[\|\mathbf X\|_{\rm F}^2]=4$, using the canonical Golden factor $1/\sqrt5$ and the factor $1/\sqrt2$ for Silver's sum of two Alamouti matrices.

Our exact ML implementation uses two 16-point searches for each Alamouti-type code and direct searches over $256$ matrices for Golden and Silver. The separated Alamouti decisions agree with direct ML. All codebooks contain $256$ distinct matrices and satisfy the average energy convention in \eqref{eq:stbc-energy-convention}.

Figure~\ref{fig:numerical-standard-stbc} uses $10^5$ codewords per point from $6$ to $14$ dB, $3\times10^5$ at $16$ dB, and $10^6$ at $18$ dB. 

Table~\ref{tab:numerical-standard-stbc} gives logarithmically interpolated SNRs at $P_{\rm C}=10^{-2}$. Eisenstein improves classical Alamouti by $0.278$ dB and $\Lambda_7$ by $0.983$ dB; Golden and Silver improve Eisenstein by $1.912$ and $1.706$ dB. All five codes have diversity order $4$.

\begin{table}[t]
\caption{Equal-rate comparison at $4$ bits per channel use.  The last column gives the number of distances evaluated by the straightforward exact decoder used in the simulation.}
\label{tab:numerical-standard-stbc}
\centering
\footnotesize
\setlength{\tabcolsep}{3pt}
\begin{tabular}{lccc}
\hline
Code & Input alphabet & SNR at $10^{-2}$ & Direct metrics\\
\hline
$\Lambda_3$ Alamouti & 16-point & $15.937$ dB & $32$ scalar\\
Classical Alamouti & 16-QAM & $16.216$ dB & $32$ scalar\\
$\Lambda_7$ Alamouti & 16-point & $16.921$ dB & $32$ scalar\\
Golden & QPSK & $14.025$ dB & $256$ matrix\\
Silver & QPSK & $14.231$ dB & $256$ matrix\\
\hline
\end{tabular}
\end{table}

The metric counts describe these implementations, not inherent complexity. Square QAM permits rectangular slicing, whereas $\Lambda_3$ and $\Lambda_7$ use two-dimensional quantisation. Fast exact Golden and Silver decoders have worst-case orders $O(M_{\rm F}^{2.5})$ and $O(M_{\rm F}^{2})$~\cite{sinnokrot2010golden,srinath2011silver}. The approximately $0.28$ dB improvement over classical Alamouti supports the finite-alphabet comparison while retaining two separate symbol decisions. Golden and Silver achieve lower CER with coupled detection.

\begin{figure}[t]
 \centering
 \includegraphics[width=0.98\columnwidth]
 {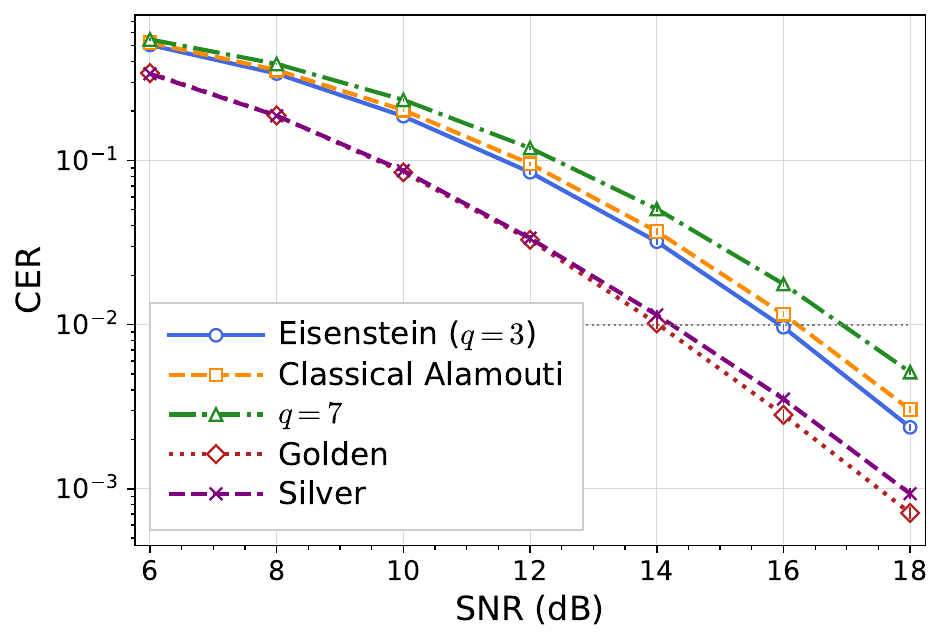}
 \caption{Exact-ML codeword-error rate over the coherent $2\times2$ Rayleigh block-fading channel at $4$ bits per channel use. The Alamouti-type codes use centred 16-point alphabets, while the Golden and Silver codes use QPSK.  All codebooks have equal cardinality and equal average transmitted energy.}
 \label{fig:numerical-standard-stbc}
\end{figure}

\section{Conclusion}\label{sec:conclusion}

This work studied finite Alamouti-type codes associated with a classical family of maximal quaternion orders. The main shaping results give exact finite energy optimality of the Eisenstein case for every $p\geq2$, at fixed alphabet size and minimum lattice distance. The ordering remains valid when each member jointly optimises its quotient representatives and translation, with an explicit positive gap for every $q>3$. These comparisons retain the common orthogonal encoder and symbol-wise ML detector.

For uniform finite inputs, the controlled information variance expansion gives its high-SNR coefficient, remainder, and limiting ratio to the mutual information deficit, including the shared fading contribution. The mutual information expansion specialises arbitrary input fading results. Standard random-coding and second-order tools give the corresponding reliability interpretation over independently faded Alamouti blocks. The results do not establish mutual information or finite-blocklength optimality of the Eisenstein case at every SNR.

Numerical results support the finite-alphabet analysis. In the centred equal rate comparison, Eisenstein improves classical Alamouti by approximately $0.28$ dB at a CER of $10^{-2}$. Golden and Silver require about $1.9$ dB and $1.7$ dB less SNR, respectively, with coupled detection. This comparison places the shaping benefit in the context of reliability and decoding effort; the measured gain supports the analysis while two separate symbol decisions are retained.

\section*{AI Use Statement}

The authors used OpenAI ChatGPT to assist with editing, specifically Fig. 1 and Fig. 2, and with grammar enhancement throughout the paper. The author reviewed and edited all output and takes full responsibility for the content of this publication.

\renewcommand\refname{References}
\bibliographystyle{ieeetr}
\bibliography{references}

@BOOK{Con2013,
	title={Sphere Packings, Lattices and Groups},
	author={Conway, J. H. and Sloane, N. J. A.},
	edition={3},
	series={Grundlehren der mathematischen Wissenschaften},
	volume={290},
	year={2013},
	publisher={Springer},
	address={New York, NY},
    doi={10.1007/978-1-4757-6568-7}}

@article{huang2017lattice,
	title={Lattice index codes from algebraic number fields},
	author={Huang, Yu-Chih},
	journal={IEEE Transactions on Information Theory},
	volume={63},
	number={4},
	pages={2098--2112},
	year={2017},
	doi={10.1109/TIT.2017.2655034},
	publisher={IEEE}
}

@article{wang2012quadratic,
  author={Wang, Genyuan and Zhang, Jian-Kang and Amin, Moeness G.},
  title={Space-Time Block Code Designs Based on Quadratic Field Extension for Two-Transmitter Antennas},
  journal={IEEE Transactions on Information Theory},
  year={2012},
  month=jun,
  volume={58},
  number={6},
  pages={4005--4013},
  doi={10.1109/TIT.2012.2184633}
}

@article{huang2018layered,
	 author={Huang, Yu-Chih and Hong, Yi and Viterbo, Emanuele and Natarajan, Lakshmi},
  journal={IEEE Transactions on Information Theory}, 
  title={Layered Space-Time Index Coding}, 
  year={2019},
  volume={65},
  number={1},
  pages={142-158},
  doi={10.1109/TIT.2018.2842144}}

@book{voight2021quaternion,
	title={Quaternion Algebras},
	author={Voight, John},
	series={Graduate Texts in Mathematics},
	volume={288},
	year={2021},
	publisher={Springer},
	address={Cham},
doi={10.1007/978-3-030-56694-4}
}

@article{huang2018lattices,
	  author={Huang, Yu-Chih and Narayanan, Krishna R. and Wang, Ping-Chung},
  journal={IEEE Transactions on Information Theory}, 
  title={Lattices Over Algebraic Integers With an Application to Compute-and-Forward}, 
  year={2018},
  volume={64},
  number={10},
  pages={6863-6877},
  doi={10.1109/TIT.2018.2848277}}

@ARTICLE{hollanti2008maximal,
  author={Hollanti, Camilla and Lahtonen, Jyrki and Lu, Hsiao-feng},
  journal={IEEE Transactions on Information Theory}, 
  title={Maximal Orders in the Design of Dense Space-Time Lattice Codes}, 
  year={2008},
  volume={54},
  number={10},
  pages={4493-4510},
  doi={10.1109/TIT.2008.928998}}

@ARTICLE{vehkalahti2009densest,
  author={Vehkalahti, Roope and Hollanti, Camilla and Lahtonen, Jyrki and Ranto, Kalle},
  journal={IEEE Transactions on Information Theory}, 
  title={On the Densest {MIMO} Lattices From Cyclic Division Algebras}, 
  year={2009},
  volume={55},
  number={8},
  pages={3751-3780},
  doi={10.1109/TIT.2009.2023713}}

@article{oggier2006perfect,
   author={Oggier, F. and Rekaya, G. and Belfiore, J.-C. and Viterbo, E.},
  journal={IEEE Transactions on Information Theory}, 
  title={Perfect Space–Time Block Codes}, 
  year={2006},
  volume={52},
  number={9},
  pages={3885-3902},
  doi={10.1109/TIT.2006.880010}}

@ARTICLE{elia2006explicit,
  author={Elia, P. and Kumar, K.R. and Pawar, S.A. and Kumar, P.V. and Lu, H.-F.},
  journal={IEEE Transactions on Information Theory}, 
  title={Explicit Space–Time Codes Achieving the Diversity–Multiplexing Gain Tradeoff}, 
  year={2006},
  volume={52},
  number={9},
  pages={3869-3884},
  doi={10.1109/TIT.2006.880037}
}

@ARTICLE{alamouti1998,
  author={Alamouti, S.M.},
  journal={IEEE Journal on Selected Areas in Communications}, 
  title={A simple transmit diversity technique for wireless communications}, 
  year={1998},
  volume={16},
  number={8},
  pages={1451-1458},
  doi={10.1109/49.730453}}

@article{alves2015lattices,
 title = {Lattices from maximal orders into quaternion algebras},
journal = {Journal of Pure and Applied Algebra},
volume = {219},
number = {4},
pages = {687-702},
year = {2015},
issn = {0022-4049},
doi = {10.1016/j.jpaa.2014.04.025},
url = {https://www.sciencedirect.com/science/article/pii/S0022404914000942},
author = {C. Alves and J.-C. Belfiore}
}

@ARTICLE{proakis1998fading,
  author={Biglieri, E. and Proakis, J. and Shamai, S.},
  journal={IEEE Transactions on Information Theory}, 
  title={Fading channels: information-theoretic and communications aspects}, 
  year={1998},
  volume={44},
  number={6},
  pages={2619-2692},
  doi={10.1109/18.720551}}

@book{berhuy2013central,
author = {Berhuy, Grgory and Oggier, Frdrique},
title = {An Introduction to Central Simple Algebras and Their Applications to Wireless Communication},
year = {2013},
isbn = {0821849379},
publisher = {American Mathematical Society},
address = {USA}
}

@BOOK{oggier2007cyclic,
  author={Oggier, Frédérique and Belfiore, Jean-Claude and Viterbo, Emanuele},
  title={Cyclic Division Algebras: A Tool for Space-Time Coding},
  year={2007},
  volume={},
  number={},
  pages={},
  publisher = {Now Foundations and Trends},
  doi={10.1561/0100000016}
}

@ARTICLE{tarokh1999stbc,
  author={Tarokh, V. and Jafarkhani, H. and Calderbank, A.R.},
  journal={IEEE Transactions on Information Theory}, 
  title={Space-time block codes from orthogonal designs}, 
  year={1999},
  volume={45},
  number={5},
  pages={1456-1467},
  doi={10.1109/18.771146}}

@ARTICLE{tarokh1998criteria,
  author={Tarokh, V. and Seshadri, N. and Calderbank, A.R.},
  journal={IEEE Transactions on Information Theory}, 
  title={Space-time codes for high data rate wireless communication: performance criterion and code construction}, 
  year={1998},
  volume={44},
  number={2},
  pages={744-765},
  doi={10.1109/18.661517}}

@ARTICLE{rodrigues2014multiple,
  author={Rodrigues, Miguel R. D.},
  journal={IEEE Transactions on Information Theory}, 
  title={Multiple-Antenna Fading Channels With Arbitrary Inputs: Characterization and Optimization of the Information Rate}, 
  year={2014},
  volume={60},
  number={1},
  pages={569-585},
  doi={10.1109/TIT.2013.2287507}}

@ARTICLE{polyanskiy2010finite,
  author={Polyanskiy, Y. and Poor, H. V. and Verdu, S.},
  journal={IEEE Transactions on Information Theory}, 
  title={Channel Coding Rate in the Finite Blocklength Regime}, 
  year={2010},
  volume={56},
  number={5},
  pages={2307-2359},
  doi={10.1109/TIT.2010.2043769}}

@book{paulraj2003,
  author    = {Arogyaswami Paulraj and Rohit Nabar and Dhananjay Gore},
  title     = {Introduction to Space--Time Wireless Communications},
  publisher = {Cambridge University Press},
  address   = {Cambridge, UK},
  year      = {2003},
  isbn      = {978-0-521-82615-1}
}

@ARTICLE{durisi2016short-packet,
  author={Durisi, Giuseppe and Koch, Tobias and Östman, Johan and Polyanskiy, Yury and Yang, Wei},
  journal={IEEE Transactions on Communications}, 
  title={Short-Packet Communications Over Multiple-Antenna Rayleigh-Fading Channels}, 
  year={2016},
  volume={64},
  number={2},
  pages={618-629},
  doi={10.1109/TCOMM.2015.2511087}
}

@INPROCEEDINGS{souza2026AESTBC,
  author={Souza, Juliana G. F. and Huang, Yu-Chih},
  booktitle={2026 IEEE International Symposium on Information Theory (ISIT)}, 
  title={Finite-Blocklength Analysis of Alamouti Codes over Eisenstein Integers}, 
  year={2026},
  volume={},
  number={},
  pages={1-6},
  doi={10.1109/ISIT62367.2026.11653867}}

@ARTICLE{collins2019coherent,
  author={Collins, Austin and Polyanskiy, Yury},
  journal={IEEE Transactions on Information Theory}, 
  title={Coherent Multiple-Antenna Block-Fading Channels at Finite Blocklength}, 
  year={2019},
  volume={65},
  number={1},
  pages={380-405},
  doi={10.1109/TIT.2018.2860979}}

@ARTICLE{belfiore2005golden,
  author={Belfiore, J.-C. and Rekaya, G. and Viterbo, E.},
  journal={IEEE Transactions on Information Theory}, 
  title={The {Golden} Code: A {$2\times2$} Full-Rate Space-Time Code With Nonvanishing Determinants},
  year={2005},
  volume={51},
  number={4},
  pages={1432-1436},
  doi={10.1109/TIT.2005.844069}}

@INPROCEEDINGS{hollanti2008silver,
  author={Hollanti, C. and Lahtonen, J. and Ranto, K. and Vehkalahti, R. and Viterbo, E.},
  booktitle={2008 IEEE Information Theory Workshop}, 
  title={On the Algebraic Structure of the {Silver} Code: A {$2\times2$} Perfect Space-Time Block Code},
  year={2008},
  volume={},
  number={},
  pages={91-94},
  doi={10.1109/ITW.2008.4578628}}

@ARTICLE{sinnokrot2010golden,
  author={Sinnokrot, Mohanned O. and Barry, John R.},
  journal={IEEE Transactions on Wireless Communications}, 
  title={Fast maximum-likelihood decoding of the golden code}, 
  year={2010},
  volume={9},
  number={1},
  pages={26-31},
  doi={10.1109/TWC.2010.01.081512}}

@ARTICLE{srinath2011silver,
  author={Srinath, K. Pavan and Rajan, B. Sundar},
  journal={IEEE Transactions on Information Theory}, 
  title={Generalized Silver Codes}, 
  year={2011},
  volume={57},
  number={9},
  pages={6134-6147},
  doi={10.1109/TIT.2011.2162276}}

@techreport{itur_m2160,
  author      = {{ITU-R}},
  title       = {{Recommendation ITU\-R M\.2160\-0}: Framework and overall objectives of the future development of {IMT} for 2030 and beyond},
  institution = {International Telecommunication Union -- Radiocommunication Sector},
  number      = {M.2160-0},
  year        = {2023},
  month       = nov,
  url         = {https://www.itu.int/dms_pubrec/itu-r/rec/m/R-REC-M.2160-0-202311-I-21-21PDF-E.pdf},
  note         = {Accessed: 2025-12-18}
}

@techreport{itur_m2516,
  author      = {{ITU-R}},
  title       = {{Report ITU-R M.2516-0}: Future technology trends of terrestrial International Mobile Telecommunications systems towards 2030 and beyond},
  institution = {International Telecommunication Union -- Radiocommunication Sector},
  number      = {M.2516-0},
  year        = {2022},
  month       = nov,
  url         = {https://www.itu.int/dms_pub/itu-r/opb/rep/R-REP-M.2516-2022-PDF-E.pdf},
  note         = {Accessed: 2025-12-18}
}

@article{freudenberger2017eisenstein,
  author  = {Freudenberger, J{\"u}rgen and Shavgulidze, Sergo},
  title   = {Signal Constellations Based on {E}isenstein Integers for Generalized Spatial Modulation},
  journal = {IEEE Communications Letters},
  volume  = {21},
  number  = {3},
  pages   = {556--559},
  year    = {2017},
  month   = mar,
  doi     = {10.1109/LCOMM.2016.2630701}
}

@INPROCEEDINGS{sinnokrot2008alamouti,
  author={Sinnokrot, Mohanned O. and Barry, John R. and Madisetti, Vijay K.},
  booktitle={2008 42nd Asilomar Conference on Signals, Systems and Computers}, 
  title={Embedded Alamouti space-time codes for high rate and low decoding complexity}, 
  year={2008},
  volume={},
  number={},
  pages={1749-1753},
  doi={10.1109/ACSSC.2008.5074726}}

@ARTICLE{ramos2014fading,
  author={Ramos, Alberto Gil C. P. and Rodrigues, Miguel R. D.},
  title={Fading Channels With Arbitrary Inputs: Asymptotics of the Constrained Capacity and Information and Estimation Measures},
  journal={IEEE Transactions on Information Theory},
  year={2014},
  volume={60},
  number={9},
  pages={5653--5672},
  month=sep,
  doi={10.1109/TIT.2014.2333746}
}

@techreport{Samsung2026Modulation6GR,
  author      = {{Samsung}},
  title       = {Discussion on Modulation for {6GR}},
  institution = {{3GPP}},
  type        = {{TSG RAN WG1 Meeting \#124} contribution},
  number      = {R1-2600753},
  address     = {Gothenburg, Sweden},
  year        = {2026},
  month       = feb,
  note        = {February 9--13, 2026},
  url         = {https://www.3gpp.org/ftp/tsg_ran/wg1_rl1/TSGR1_124/Docs/R1-2600753.zip}
}

@inproceedings{Shao2026APSKISAC,
  author    = {Shao, Yujie and Qiu, Min and Lee, Ming-Chun and
               Huang, Yu-Chih and Yuan, Jinhong},
  title     = {Design of {APSK} Constellations Approaching the
               Communication-Sensing Pareto Boundary for {ISAC}},
  booktitle = {2026 IEEE Information Theory Workshop (ITW)},
  year      = {2026},
  note      = {Accepted; extended version available as arXiv:2605.25047},
  url       = {https://arxiv.org/abs/2605.25047}
}

@article{Du2024ISACPCS,
  author  = {Du, Zhen and Liu, Fan and Xiong, Yifeng and
             Han, Tony Xiao and Eldar, Yonina C. and Jin, Shi},
  title   = {Reshaping the {ISAC} Tradeoff Under {OFDM} Signaling:
             A Probabilistic Constellation Shaping Approach},
  journal = {IEEE Transactions on Signal Processing},
  volume  = {72},
  pages   = {4782--4797},
  year    = {2024},
  doi     = {10.1109/TSP.2024.3465499}
}

@inproceedings{Geiger2025Joint,
  author    = {Geiger, Benedikt and Liu, Fan and Lu, Shihang and
               Rode, Andrej and Schmalen, Laurent},
  title     = {Joint Optimization of Geometric and Probabilistic
               Constellation Shaping for {OFDM-ISAC} Systems},
  booktitle = {2025 IEEE 5th International Symposium on Joint
               Communications \& Sensing (JC\&S)},
  year      = {2025},
  month     = jan,
  pages     = {1--6},
  address   = {Oulu, Finland},
  doi       = {10.1109/JCS64661.2025.10880638}
}

\end{document}